\tikzset{curve/.style={settings={#1},to path={(\tikztostart)
    .. controls ($(\tikztostart)!\pv{pos}!(\tikztotarget)!\pv{height}!270:(\tikztotarget)$)
    and ($(\tikztostart)!1-\pv{pos}!(\tikztotarget)!\pv{height}!270:(\tikztotarget)$)
    .. (\tikztotarget)\tikztonodes}},
    settings/.code={\tikzset{quiver/.cd,#1}
        \def\pv##1{\pgfkeysvalueof{/tikz/quiver/##1}}},
    quiver/.cd,pos/.initial=0.35,height/.initial=0}
\tikzset{tail reversed/.code={\pgfsetarrowsstart{tikzcd to}}}
\tikzset{2tail/.code={\pgfsetarrowsstart{Implies[reversed]}}}
\tikzset{2tail reversed/.code={\pgfsetarrowsstart{Implies}}}
\tikzset{no body/.style={/tikz/dash pattern=on 0 off 1mm}}
\newtheorem{theorem}{Theorem}[section]
\newtheorem{proposition}[theorem]{Proposition}
\newtheorem{lemma}[theorem]{Lemma}
\newtheorem{conjecture}[theorem]{Conjecture}
\theoremstyle{definition}
\newtheorem{definition}[theorem]{Definition}
\theoremstyle{remark}
\newtheorem{remark}[theorem]{Remark}
\newcommand{\EM}{\mathrm{EM}}
\newcommand{\EMs}{\mathrm{EM}_{\mathrm{s}}}
\newcommand{\inl}{\mathsf{inl}}
\newcommand{\inr}{\mathsf{inr}}
\renewcommand{\a}{\mathsf{a}}
\renewcommand{\b}{\mathsf{b}}
\newcommand{\point}{\bullet}
\newcommand{\mD}{\mathcal{D}}
\newcommand{\mP}{\mathcal{P}}
\newcommand{\id}{\mathrm{id}}
\newcommand{\Ms}{M^{\mathrm{s}}}
\newcommand{\etas}{\eta^{\mathrm{s}}}
\newcommand{\mus}{\mu^{\mathrm{s}}}
\newcommand{\musb}{\mu^{\mathrm{s},\mathsf{r}}}
\newcommand{\forget}{U^M}
\newcommand{\forgetMs}{U^{\Ms}}
\newcommand{\forgets}{U^M_{\mathrm{s}}}
\newcommand{\bigplus}{\mathop{\mbox{\Large$+$}}}
\newcommand{\xRightarrow}[1]{\stackrel{#1}{\Longrightarrow}}
\newcommand{\terms}[1]{T_{#1}}
\newcommand{\Alg}{\mathbf{A}}
\newcommand{\deltar}{\delta^{\mathsf{r}}}
\title{Semialgebras and Weak Distributive Laws} 
\author{Daniela Petrişan \institute{IRIF\\ Paris, France} \email{petrisan@irif.fr} \and
Ralph Sarkis \institute{ENS de Lyon\\ Lyon, France} \email{ralph.sarkis@ens-lyon.fr}}
\begin{document}
\maketitle
\begin{abstract}
  Motivated by recent work on weak distributive laws and their
  applications to coalgebraic semantics, we investigate the
  algebraic nature of semialgebras for a monad. These are algebras
  for the underlying functor of the monad subject to the
  associativity axiom alone --- the unit axiom from the definition of
  an Eilenberg--Moore algebras is dropped. We prove that if the
  underlying category has coproducts, then semialgebras for a monad
  $M$ are in fact the Eilenberg--Moore algebras for a suitable monad
  structure on the functor $\id+M$, which we call the semifree monad
  $\Ms$. We also provide concrete algebraic presentations for
  semialgebras for the maybe monad, the semigroup monad and the
  finite distribution monad.
  
  A second contribution is characterizing the weak distributive laws
  of the form $MT\Rightarrow TM$ as strong distributive laws $\Ms
  T\Rightarrow T\Ms$ subject to an additional condition.
  \end{abstract}

\section{Introduction}\label{intro}
Distributive laws~\cite{Beck1969} are a standard approach to composing
monads. A distributive law of the monad $M$ over the monad $T$ is a
natural transformation $\lambda\colon MT \Rightarrow TM$ satisfying four axioms that stipulate its interactions with the units and
the multiplications of the two monads.
Unfortunately, such laws do not exist in several instances relevant
for the semantics of programming languages and for the composition of
computational effects. For example, there is no distributive law of
the finite distribution monad (used for modeling probabilistic choice)
over the powerset monad (used for modeling non-determinism),
see~\cite{Varacca06, Varacca-PhD}. Similarly, there is no distributive
law of the powerset monad over itself~\cite{Klin-Salamanca}. The
absence of such laws renders it difficult to reason in a compositional
way about systems combining non-determinism and probabilistic choice,
as emphasized in a series of works in domain theory~\cite{Keimel17,
  Mislove00} or in coalgebraic semantics~\cite{Bonchi19, Bonchi17b,
  MioSarkisVign21}.

Recently, weaker notions of distributive laws have been
considered~\cite{Street, Garner, Bohm}. In particular, Garner~\cite{Garner}
introduces a notion of weak distributive law $\lambda\colon MT \Rightarrow TM$
in which he drops the axiom related to the unit of the monad $M$,
while maintaining the three other axioms.  He proves that weak
distributive laws correspond under mild assumptions to a suitable
notion of weak liftings of the monad $T$ to the Eilenberg--Moore
algebras of $M$. The main example in \emph{loc. cit.} is a canonical
weak distributive law of the ultrafilter monad over the powerset
monad. The corresponding weak lifting of the powerset monad is the
Vietoris monad on the category of compact Hausdorff spaces.

In~\cite{GoyPetrisan20}, the same techniques were employed to find a
canonical weak distributive law of the finite distribution monad $\mD$
over the powerset monad $\mP$. The corresponding weak lifting of the
powerset is the convex powerset monad on the category of convex
algebras --- the Eilenberg--Moore algebras for $\mD$. This weak
distributive law enables a neat compositional approach to the
coalgebraic semantics of systems featuring both non-determinism and
probabilistic choice. This line of work was further extended to a
continuous setting in~\cite{Goy2021} where a weak distributive
law of the Vietoris monad over itself was
given. Furthermore,~\cite{Bonchi2021} provides a weak distributive law
between the powerset monad and the left-semimodule monad.

One of the technical ingredients in Garner's paper was to use the fact
that a weak distributive law $\lambda\colon MT\Rightarrow TM$ induces a
lifting of the monad $T$ to the category of semialgebras for the
monad $M$, where a semialgebra is a morphism $a\colon MX\to X$
satisfying only the associativity axiom from the standard definition
of an Eilenberg--Moore algebra. The weak lifting of the monad $T$ is
then obtained by splitting a certain idempotent in the category of
semialgebras for $M$.

For example, the weak law of~\cite{GoyPetrisan20} corresponds to a
lifting of $\mP$ to the category of semialgebras for the monad $\mD$,
which we refer to as convex semialgebras. A natural question arising
in these examples is what are concrete descriptions of
semialgebras. Even if it may seem surprising at first, dropping the
unit axiom alone does not considerably impair the structure given by
operations and equations.

Let us illustrate this with a simpler example, namely the semialgebras
for the maybe monad --- which maps $X$ to $X+\mathbf{1}$ where $\mathbf{1} = \{\star\}$. All the details will
be given in Section~\ref{exmps}. First, recall that algebras for the
maybe monad are pointed sets. Indeed, such an algebra is a map
$a\colon X+\mathbf{1}\to X$ satisfying the usual unit axiom
($a\circ\eta_X=\id_X$) and an associativity axiom. Already the unit
axiom entails that the first component of $a$ is the identity on $X$,
hence to give an algebra structure $a$ amounts to give one point
$\bullet\colon \mathbf{1}\to X$.

Let us consider now semialgebras $a\colon X+\mathbf{1}\to X$ for the maybe
monad. We no longer have that $a\circ\eta_X=\id_X$, but using the
associativity axiom, we can infer that $a\circ\eta_X$ is the first component of $a$, it is
idempotent, and furthermore, it preserves the point
$\bullet\colon \mathbf{1}\to X$. We can prove that the semialgebras for the
maybe monad are pointed sets equipped with an idempotent unary
operation that preserves the point. This algebraic theory corresponds
to a monad structure on the functor $X+X+\mathbf{1}$, that is, the coproduct of
the identity functor and  the underlying functor of the maybe monad.
Perhaps surprisingly, this result generalizes to arbitrary monads on
categories with coproducts.

\paragraph{Contributions.}
A first contribution of our paper is to unravel the algebraic nature
of semialgebras. Given a monad $M$ on a category $\mathbf{C}$ with
coproducts, we exhibit a monad structure $\Ms$ on the functor $\id+M$
(Theorem~\ref{thm:semifreemonad}) and we show that the category of
Eilenberg--Moore semialgebras for $M$ is isomorphic to the category of
Eilenberg--Moore algebras for $\Ms$ (Theorem~\ref{thm:isosemialgalg}).
In Section~\ref{exmps} we consider several examples: the maybe monad,
the semigroup monad and the finite distribution monad. In each case we
provide concrete algebraic presentations for the semialgebras.

A second contribution of our paper can be summed up in the slogan
``Weak distributive laws are strong''. Indeed, we can characterize the
liftings of the monad $T$ to the category of semialgebras for $M$
that correspond to weak distributive laws $MT\Rightarrow TM$, obtaining a
correspondence theorem akin to that of Beck,
Theorem~\ref{thm:wdlliftings}. Combining this result with the
isomorphism between semialgebras for $M$ and algebras for $\Ms$, we
prove a correspondence between weak distributive laws $MT\Rightarrow TM$ and
strong distributive laws $\Ms T\Rightarrow T\Ms$ that satisfy an additional
constraint (Theorem~\ref{thm:wdlanddl}).

\paragraph{Related work.} A similar question has been addressed
in~\cite{Hyland2021} in a 2-dimensional setting. In
\emph{loc. cit.} left semialgebras for a 2-monad were shown to be the
algebras for another 2-monad, which was obtained using a colax colimit
construction, see~\cite[Proposition~27]{Hyland2021}. 

\paragraph{Acknowledgements.} We thank Christine Tasson and Martin Hyland for discussion on these topics once upon a time in the wake of the pandemic when IRIF was still filled with people and thesis were defended on campus. This work was performed within the framework of the LABEX MILYON (ANR-10-LABX-0070) of Universit\'e de Lyon, within the program ``Investissements d'Avenir'' (ANR-11-IDEX-0007) operated by the French National Research Agency (ANR).

\section{Background}\label{background}
In this section, we present some definitions and results about monads, distributive laws, coproducts and universal algebra. We assume some familiarity with basic concepts in category theory --- \cite{Riehl,Awodey,MacLane71} are standard references.
\subsection{Monads and Distributive Laws}
\begin{definition}
    A \textbf{monad} on a category $\mathbf{C}$ is a triple comprised
    of an endofunctor $M: \mathbf{C} \rightarrow \mathbf{C}$ and two
    natural transformations $\eta: \id_{\mathbf{C}}\Rightarrow M$ and
    $\mu: M^2 \Rightarrow M$ called the \textbf{unit} and
    \textbf{multiplication} respectively that make
    \eqref{diag:unitmonad} and \eqref{diag:multmonad} commute. We
    refer to \eqref{diag:multmonad} as the associativity of $\mu$.\\
    \begin{minipage}{0.5\textwidth}
        \begin{equation}\label{diag:unitmonad}
            \begin{tikzcd}
                M & {M^2} & M \\
                & M
                \arrow["M\eta", from=1-1, to=1-2]
                \arrow["\mu", from=1-2, to=2-2]
                \arrow["{\id_{M}}"', from=1-1, to=2-2]
                \arrow["{\eta M}"', from=1-3, to=1-2]
                \arrow["{\id_{M}}", from=1-3, to=2-2]
            \end{tikzcd}
        \end{equation}
    \end{minipage}
    \begin{minipage}{0.5\textwidth}
        \begin{equation}\label{diag:multmonad}
            \begin{tikzcd}
                {M^3} & {M^2} \\
                {M^2} & M
                \arrow["\mu"', from=2-1, to=2-2]
                \arrow["M\mu"', from=1-1, to=2-1]
                \arrow["{\mu M}", from=1-1, to=1-2]
                \arrow["\mu", from=1-2, to=2-2]
            \end{tikzcd}
        \end{equation}
    \end{minipage}\\
\end{definition}
\begin{definition}
    Let $(M,\eta, \mu)$ be a monad on $\mathbf{C}$, an
    $M$\textbf{-algebra} is a pair $(X, x)$ consisting of an object
    $X$ and morphism $x : MX \rightarrow X$ in $\mathbf{C}$ such that
    \eqref{diag:algunit} and \eqref{diag:algmult} commute. We refer to
    \eqref{diag:algunit} as the unit axiom of $x$ and to
    \eqref{diag:algmult} as the associativity of $x$.\\
    \begin{minipage}{0.5\textwidth}
        \begin{equation}\label{diag:algunit}
            \begin{tikzcd}
                X \arrow[rd, "\id_X"'] \arrow[r, "\eta_X"] & MX \arrow[d, "x"] \\ & X
            \end{tikzcd}
        \end{equation}
    \end{minipage}
    \begin{minipage}{0.5\textwidth}
        \begin{equation}\label{diag:algmult}
            \begin{tikzcd}
                M^2X \arrow[d, "Mx"'] \arrow[r, "\mu_X"] & MX \arrow[d, "x"] \\
                MX \arrow[r, "x"']  & X
                \end{tikzcd}
        \end{equation}
    \end{minipage}
\end{definition}
\begin{definition}
    Given two $M$-algebras $(X, x)$ and $(Y, y)$, an $M$-algebra \textbf{homomorphism} $h: (X, x) \rightarrow (Y, y)$ is a morphism $h:X \rightarrow Y$ in $\mathbf{C}$ making \eqref{diag:alghom} commute.
    \begin{equation}\label{diag:alghom}
        \begin{tikzcd}
            MX \arrow[d, "x"'] \arrow[r, "Mh"] & MY
            \arrow[d, "y"] \\
            X \arrow[r, "h"'] & Y
        \end{tikzcd}
    \end{equation}
\end{definition}
For a monad $M$, the category of $M$-algebras and their homomorphisms is called the \textbf{Eilenberg--Moore} category of $M$ and denoted $\EM(M)$. We denote $\forget: \EM(M) \rightarrow \mathbf{C}$ the forgetful functor sending an $M$-algebra $(X,x)$ to $X$ and a homomorphism to its underlying morphism.
A morphism $x: MX \rightarrow X$ that satisfies associativity \eqref{diag:algmult} (but not necessarily the unit axiom \eqref{diag:algunit}) is called an $M$\textbf{-semialgebra}. We denote $\EMs(M)$ the category of $M$-semialgebras and their homomorphisms (defined as for $M$-algebras). We denote $\forgets: \EMs(M) \rightarrow \mathbf{C}$ the forgetful functor sending an $M$-semialgebra $(X,x)$ to $X$ and a homomorphism to its underlying morphism. 

Distributive laws between two monads, introduced in~\cite{Beck1969}, are
the category theoretic tool for composing monads and for representing
that the corresponding algebraic structures interact in a suitable way
via distributivity axioms.

\begin{definition}
    Let $(M, \eta^M, \mu^M)$ and $(T, \eta^T, \mu^T)$ be two monads, a natural transformation $\lambda: M T\Rightarrow TM$ is called a \textbf{(monad) distributive law of $M$ over $T$} if it makes the following diagrams commute.\\
    \begin{minipage}{0.5\textwidth}
        \begin{equation}\label{diag:dlunitM}
        \begin{tikzcd}
            & {T} \\
            {MT} && {TM}
            \arrow["{\eta^MT}"', from=1-2, to=2-1]
            \arrow["{T\eta^M}", from=1-2, to=2-3]
            \arrow["\lambda"', from=2-1, to=2-3]
        \end{tikzcd}
        \end{equation}
    \end{minipage}
    \begin{minipage}{0.5\textwidth}
        \begin{equation}\label{diag:dlunitMhat}
        \begin{tikzcd}
            & M \\
            {MT} && {TM}
            \arrow["{M\eta^T}"', from=1-2, to=2-1]
            \arrow["{\eta^TM}", from=1-2, to=2-3]
            \arrow["\lambda"', from=2-1, to=2-3]
        \end{tikzcd}
        \end{equation}
    \end{minipage}\\
    \begin{minipage}{0.5\textwidth}
        \begin{equation}\label{diag:dlmultM}
        \begin{tikzcd}
            {MMT} & {MTM} & {TMM} \\
            {MT} && {TM}
            \arrow["\lambda"', from=2-1, to=2-3]
            \arrow["{\mu^M T}"', from=1-1, to=2-1]
            \arrow["M\lambda", from=1-1, to=1-2]
            \arrow["{\lambda M}", from=1-2, to=1-3]
            \arrow["{T\mu^M}", from=1-3, to=2-3]
        \end{tikzcd}
        \end{equation}
    \end{minipage}
    \begin{minipage}{0.5\textwidth}
        \begin{equation}\label{diag:dlmultMhat}
        \begin{tikzcd}
            {MTT} & {TMT} & {TTM} \\
            {MT} && {TM}
            \arrow["\lambda"', from=2-1, to=2-3]
            \arrow["{M\mu^T}"', from=1-1, to=2-1]
            \arrow["{\lambda T}", from=1-1, to=1-2]
            \arrow["{T\lambda}", from=1-2, to=1-3]
            \arrow["{\mu^TM}", from=1-3, to=2-3]
        \end{tikzcd}
        \end{equation}
    \end{minipage}\\
\end{definition}
In~\cite{Street,Bohm}, the authors investigated weaker notions of monad distributive laws motivated by work on weak entwining operators. More recently, Garner~\cite{Garner} used one of them to exhibit the Vietoris monad on the category of compact
Hausdorff spaces as a weak lifting of the powerset monad. In the
sequel, we adopt the terminology and definitions of~\cite{Garner}.
\begin{definition}
    A \textbf{weak distributive law of $M$ over $T$} is a natural transformation $\lambda : MT \Rightarrow TM$ making \eqref{diag:dlunitMhat}, \eqref{diag:dlmultM} and \eqref{diag:dlmultMhat} commute (but not necessarily \eqref{diag:dlunitM}).
\end{definition}
\begin{definition}
    A \textbf{lifting} of $T$ to $\EM(M)$ is a monad $(\widetilde{T},\widetilde{\eta},\widetilde{\mu})$ on $\EM(M)$ such that the functor, unit and multiplication commute with the forgetful functor $\forget$, i.e., the following equations hold.
    \[\forget \circ \widetilde{T} = T \circ \forget \qquad \forget\widetilde{\eta} = \eta^T\forget \qquad \forget\widetilde{\mu} = \mu^T\forget\]
    A lifting of $T$ to $\EMs(M)$ is a monad $(\widetilde{T},\widetilde{\eta},\widetilde{\mu})$ satisfying $\forgets\widetilde{T} = T\forgets$, $\forgets\widetilde{\eta} = \eta^T\forgets$ and $\forgets\widetilde{\mu} = \mu^T\forgets$.
\end{definition}

A standard result that goes back to the work of Beck is the
correspondence between distributive laws and liftings to
Eilenberg--Moore categories. We recall below the transformation of a law into a lifting and its inverse as we will later use them in the weak setting.
\begin{proposition}\label{prop:isodllifts}
    Distributive laws $\lambda: MT \Rightarrow TM$ are in correspondence with liftings of $T$ to $\EM(M)$.
\end{proposition}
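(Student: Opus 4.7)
The plan is to explicitly construct mutually inverse assignments between distributive laws $\lambda: MT \Rightarrow TM$ and liftings $\widetilde{T}$ of $T$ to $\EM(M)$, following Beck's original approach.

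For the forward direction, given a distributive law $\lambda$, I would define $\widetilde{T}(X,x) := (TX,\, Tx \circ \lambda_X)$ on objects and $\widetilde{T}h := Th$ on morphisms. To verify that $(TX, Tx \circ \lambda_X)$ is an $M$-algebra, the unit axiom follows from \eqref{diag:dlunitM} (which rewrites $\lambda_X \circ \eta^M_{TX}$ as $T\eta^M_X$) together with the unit axiom of $x$, while associativity follows from \eqref{diag:dlmultM} together with naturality of $\lambda$ and the associativity of $x$. Naturality of $\lambda$ then makes $Th$ an $M$-algebra homomorphism. Setting $\widetilde{\eta}_{(X,x)} := \eta^T_X$ and $\widetilde{\mu}_{(X,x)} := \mu^T_X$, the fact that these are algebra homomorphisms over the newly defined algebras amounts exactly to \eqref{diag:dlunitMhat} and \eqref{diag:dlmultMhat}, respectively. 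Since $\forget$ is faithful and reflects commuting diagrams, the monad axioms for $\widetilde{T}$ transfer immediately from those of $T$.

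For the reverse direction, given a lifting $\widetilde{T}$, the free $M$-algebra $(MX, \mu^M_X)$ is sent to an algebra of the form $(TMX, \alpha_X)$, where $\alpha_X\colon MTMX \to TMX$ is a canonical $M$-algebra structure on $TMX$ determined by $\widetilde{T}$. I would then define
\[\lambda_X := \alpha_X \circ MT\eta^M_X \colon MTX \to TMX.\]
Naturality of $\lambda$ comes from the fact that $\widetilde{T}$ acts as $T$ on underlying morphisms together with naturality of $\eta^M$. The four distributive law axioms should then correspond respectively to: the unit axiom for the algebra $(TMX,\alpha_X)$; the fact that $\widetilde{\eta}_{(MX,\mu^M_X)} = \eta^T_{MX}$ is an algebra homomorphism into $(TMX,\alpha_X)$; functoriality of $\widetilde{T}$ applied to the free algebra homomorphism $\mu^M_X \colon (M^2X, \mu^M_{MX}) \to (MX, \mu^M_X)$; and the fact that $\widetilde{\mu}_{(MX,\mu^M_X)} = \mu^T_{MX}$ is a homomorphism $\widetilde{T}^2(MX,\mu^M_X) \to \widetilde{T}(MX,\mu^M_X)$.

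To show the assignments are mutually inverse, I would use that any $M$-algebra $(X,x)$ arises as a split coequalizer of $\mu^M_X, Mx \colon M^2X \rightrightarrows MX$ in $\EM(M)$, so the lifting $\widetilde{T}(X,x)$ is fully determined by its action on free algebras; this recovers $\widetilde{T}(X,x) = (TX, Tx \circ \lambda_X)$ from the extracted $\lambda$. Conversely, starting from $\lambda$, constructing $\widetilde{T}$ and re-extracting yields $T\mu^M_X \circ \lambda_{MX} \circ MT\eta^M_X$, which collapses back to $\lambda_X$ using naturality of $\lambda$ and $\mu^M_X \circ M\eta^M_X = \id$. The main obstacle is pure bookkeeping rather than any genuinely new idea: the four distributive law axioms must be matched one-by-one with the four pieces of data packaged in a lifting (well-definedness of the algebra plus the homomorphism conditions for $\widetilde{\eta}$ and $\widetilde{\mu}$), and the verifications reduce to moderately long but mechanical diagram chases. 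This classical statement will also serve as the template for the weak version in Theorem~\ref{thm:wdlliftings}, where dropping axiom \eqref{diag:dlunitM} will precisely correspond to replacing $\EM(M)$ by $\EMs(M)$.
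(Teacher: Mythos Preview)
Your proposal is correct and follows essentially the same approach as the paper: both directions use exactly the constructions the paper records, namely $\widetilde{T}(X,x) = (TX, Tx \circ \lambda_X)$ and $\lambda_X = \widetilde{T}\mu^M_X \circ MT\eta^M_X$. The paper's proof is just a terse sketch giving these two assignments, whereas you spell out the verifications and the inverse checks; the only cosmetic difference is that for recovering $\widetilde{T}$ from $\lambda$ you invoke the split-coequalizer presentation of algebras, while the paper's style (visible in the appendix proof of Theorem~\ref{thm:wdlliftings}) would instead use directly that $x\colon (MX,\mu^M_X)\to(X,x)$ is a homomorphism together with $x\circ\eta^M_X=\id_X$.
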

\begin{proof}
    Given a distributive law $\lambda: MT \Rightarrow TM$, we construct a lifting $\widetilde{T}$ sending $(X,x)$ to $(TX,Tx \circ \lambda_X)$. Its action on morphisms is determined by the equation $\forget \circ \widetilde{T} = T \circ \forget$, it must send $f$ to $Tf$. Given a lifting $(\widetilde{T},\widetilde{\eta},\widetilde{\mu})$, we construct a distributive law whose components are $\lambda_X= \widetilde{T}\mu^M_X \circ MT\eta^M_X$ where $\widetilde{T}\mu^M_X : MTMX \rightarrow TMX$ is the image of the free algebra $\mu^M_X:MMX \rightarrow MX$ under the functor $\widetilde{T}$.
\end{proof}
\begin{remark}
  There is a slight abuse of notation when writing $\widetilde{T}\mu^M_X$. To avoid any ambiguity, we will reserve writing $\widetilde{T}a$ only in
  the situations when $a$ is an algebra or semialgebra structure, but
  not when $a$ is a morphism of algebras.  Notice that if $f$ is a
  morphism of algebras, then the morphism $\widetilde{T}(f)$ is carried by
  $Tf$, since $\widetilde{T}$ is a lifting. In this situation we simply
  write $Tf$ instead of $\widetilde{T}(f)$.
\end{remark}
\subsection{Coproducts}
We recall here the definition of coproducts in order to present our notation and useful equations.
\begin{definition}
    Let $X$ and $Y$ be objects of $\mathbf{C}$, the coproduct of $X$ and $Y$ is an object $X+Y$ and morphisms $\inl^{X+Y}:X \rightarrow X+Y$ and $\inr^{X+Y}: Y \rightarrow X+Y$ such that for any pair of morphisms $k_X : X \rightarrow K$ and $k_Y: Y \rightarrow K$, there is a unique mediating morphism $!: X+Y \rightarrow K$ making \eqref{diag:upcoprod} commute.
    \begin{equation}\label{diag:upcoprod}
        \begin{tikzcd}
            X & {X+Y} & Y \\
            & K
            \arrow["{\inl^{X+Y}}", from=1-1, to=1-2]
            \arrow["{\inr^{X+Y}}"', from=1-3, to=1-2]
            \arrow["{!}", dashed, from=1-2, to=2-2]
            \arrow["{k_Y}"', from=1-1, to=2-2]
            \arrow["{k_Y}", from=1-3, to=2-2]
        \end{tikzcd}
    \end{equation}
    We may omit the superscripts on $\inl$ and $\inr$ when the codomain is clear from context. We denote $[k_X,k_Y]$ the unique morphism $X+Y \rightarrow K$ satisfying $[k_X,k_Y] \circ \inl = k_X$ and $[k_X,k_Y] \circ \inr = k_Y$. Given morphisms $f:X \rightarrow X'$ and $g: Y \rightarrow Y'$, we denote $f+g := [\inl^{X'+Y'} \circ f,\inr^{X'+Y'} \circ g]$. In the rest of the paper, we will often use the following easily derivable identities.\\
    \begin{minipage}{.49\linewidth}
      \begin{align}
        h \circ [k_X,k_Y] &= [h \circ k_X, h \circ k_Y]\label{eqn:coprod-1}\\
        \left[k_{X'},k_{Y'}\right] \circ (f+g) &= [k_{X'} \circ f,k_{Y'} \circ g] \label{eqn:coprod-2}
      \end{align}
\end{minipage}~~
\begin{minipage}{.49\linewidth}
  \begin{align}
    (f+g) \circ \inl^{X+Y} &= \inl^{X'+Y'} \circ f\label{eqn:coprod-3}\\
    (f+g) \circ \inr^{X+Y} &= \inr^{X'+Y'} \circ g \label{eqn:coprod-4}
  \end{align}
\end{minipage}
%
\end{definition}
When $\mathbf{C}$ has all coproducts, the category of endofunctors on $\mathbf{C}$ also does and coproducts are taken pointwise. Namely, given functors $F,G: \mathbf{C} \rightarrow \mathbf{C}$, $F+G$ sends an object $X$ to $FX+GX$ and a morphism $f$ to $Ff + Gf$. Moreover, we have
\[\inl^{F+G}_X = \inl^{FX+GX} \qquad \text{ and }\qquad \inr^{F+G}_X = \inr^{FX+GX}.\]

\subsection{Universal Algebra}
Here, we introduce just enough universal algebra to make use of the link between algebraic theories and monads in Section \ref{exmps} --- \cite{Bauer2019} is a longer gentle introduction to these notions.
\begin{definition}
    An \textbf{algebraic signature} is a set $\Sigma$ of operation symbols along with \textbf{arities} in $\mathbb{N}$, we denote $f: n \in \Sigma$ for an $n$-ary operation symbol $f$ in $\Sigma$. Given a set $X$, one constructs the set of $\Sigma$\textbf{-terms} with variables in $X$, denoted $\terms{\Sigma}(X)$ by iterating operations symbols:
    \begin{align*}
        \forall x \in X, &\ x \in \terms{\Sigma}(X)\\
        \forall t_1, \dots, t_n \in \terms{\Sigma}(X), f: n \in \Sigma, &\ f(t_1, \dots, t_n) \in \terms{\Sigma}(X).
    \end{align*}
    An \textbf{equation} over $\Sigma$ is a pair of $\Sigma$-terms over a set of indeterminate variables which we usually denote with an equality sign (e.g.: $s = t$ for $s,t \in \terms{\Sigma}(X)$ and $X$ is the set of variables). An \textbf{algebraic theory} is a tuple $(\Sigma, E)$ of a signature $\Sigma$ and a set $E$ of equations over $\Sigma$.

    Given an algebraic theory $(\Sigma,E)$, a $(\Sigma, E)$\textbf{-algebra} is a set $A$ along with operations $f^A: A^n \rightarrow A$ for all $f: n \in \Sigma$ (with the convention $A^0= \mathbf{1}$) such that the pairs of terms in $E$ are always equal when the operation symbols and variables are instantiated in $A$.\footnote{The operation symbol $f$ is always instantiated by $f^A$ and a variable can be instantiated by any element of $A$. For instance, suppose $(A,f^A, g^A)$ is a $(\Sigma,E)$-algebra and $f(x,g(y)) = g(y)$ is an equation in $E$, then for any $a, b\in A$, $f^A(a,g^A(b)) = g^A(b)$.}

    Given two $(\Sigma, E)$-algebras $A$ and $B$, a
    \textbf{homomorphism} between them is a map $h: A \rightarrow B$
    commuting with all operations in $\Sigma$, that is, $\forall f: n
    \in \Sigma, h\circ f^A = f^B \circ h^n$.\footnote{We write $h^n$
      for coordinatewise application of the map $h$ to vectors in
      $A^n$, i.e., $h^n(a_1,\dots,a_n) = (h(a_1),\dots,h(a_n))$.} The
    category of $(\Sigma,E)$-algebras and their homomorphisms is
    denoted $\Alg(\Sigma,E)$.
\end{definition}
We say that $(\Sigma,E)$ is an \textbf{algebraic presentation} for a monad $(M,\eta,\mu)$ if $\Alg(\Sigma,E) \cong \EM(M)$.

\section{Semifree Monad}\label{semifree}
In a 2-categorical setting, Hyland and Tasson gave an explicit
construction showing that the category of the so called left-semi
algebras of a 2-monad is monadic over the base category
\cite{Hyland2021}. This heavily relied on the 2-categorical
structure, particularly on the existence of some colax colimit. In
this section, we will show a similar result for semialgebras in a
1-categorical setting assuming only the existence of coproducts.

In the sequel, let $\mathbf{C}$ be a category with all coproducts and
$(M,\eta,\mu)$ be a monad on $\mathbf{C}$. We will define a monad
$\Ms$ and prove that $\EMs(M) \cong \EM(\Ms)$. We call $\Ms$ the
\textbf{semifree monad} on $M$ because of the similarity with the
definition of the algebraically free monad on a functor \cite{Kelly1980}. The functor
$\Ms$ is $\id_{\mathbf{C}}+M$, that is, the coproduct of
$\id_{\mathbf{C}}$ and $M$. The unit is $\etas := \inl^{\id+M}$ and
the multiplication is $\mus := [\id_{\id+M},\inr^{\id+M} \circ \mu
  \circ M[\eta,\id_M]]$, in a diagram:
\[\begin{tikzcd}
	&& \Ms\Ms \\
	{\id_{\mathbf{C}}+M} &&&& {M(\id_{\mathbf{C}}+M)} \\
	& {\id_{\mathbf{C}}+M} & M & MM
	\arrow["{\id}"', from=2-1, to=3-2]
	\arrow["\inr", from=3-3, to=3-2]
	\arrow["\mu", from=3-4, to=3-3]
	\arrow["{M[\eta,\id]}", from=2-5, to=3-4]
	\arrow["\inl", from=2-1, to=1-3]
	\arrow["\inr"', from=2-5, to=1-3]
	\arrow["\mus"{description}, dotted, from=1-3, to=3-2]
\end{tikzcd}\]
 Let us show that $(\Ms,\etas,\mus)$ is a monad.
\begin{lemma}\label{lem:unit-mus}
    The following diagram commutes.
    \begin{equation}\label{diag:unitdiagMs}
        \begin{tikzcd}
            \Ms & {\Ms\Ms} & \Ms \\
            & \Ms
            \arrow["\Ms\etas", from=1-1, to=1-2]
            \arrow["\mus", from=1-2, to=2-2]
            \arrow["{\id_{\Ms}}"', from=1-1, to=2-2]
            \arrow["{\etas \Ms}"', from=1-3, to=1-2]
            \arrow["{\id_{\Ms}}", from=1-3, to=2-2]
        \end{tikzcd}
    \end{equation}
\end{lemma}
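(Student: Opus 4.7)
The plan is to verify the two triangles of \eqref{diag:unitdiagMs} separately by unfolding the definition of $\mus$ in terms of the coproduct pairing $[-,-]$ and using the coproduct identities \eqref{eqn:coprod-1}--\eqref{eqn:coprod-4}. The right-hand triangle $\mus \circ \etas\Ms = \id_{\Ms}$ will be essentially immediate from the definition of $\mus$, whereas the left-hand triangle $\mus \circ \Ms\etas = \id_{\Ms}$ will require the unit axiom $\mu \circ M\eta = \id_M$ from the monad structure on $M$.

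For the right triangle, I would observe that the component $(\etas\Ms)_X = \etas_{\Ms X} = \inl^{\Ms X + M\Ms X}$ is exactly the left coproduct injection into $\Ms\Ms X$. Since $\mus_X = [\id_{\Ms X},\; \inr^{\Ms X} \circ \mu_X \circ M[\eta_X,\id_{MX}]]$, the composite $\mus_X \circ \inl^{\Ms X + M\Ms X}$ is, by definition of $[-,-]$, equal to $\id_{\Ms X}$. This finishes the right triangle with no use of the monad axioms.

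For the left triangle, I would first compute $\Ms\etas$. Since $\Ms = \id+M$ acts on a morphism $f$ as $f + Mf$, the component $(\Ms\etas)_X = \etas_X + M\etas_X = \inl^{\Ms X+M\Ms X} \circ \inl^{X+MX} \; \text{paired with} \; \inr^{\Ms X + M\Ms X} \circ M\inl^{X+MX}$, using \eqref{eqn:coprod-3}--\eqref{eqn:coprod-4}. Composing with $\mus_X$ and applying \eqref{eqn:coprod-1}, the first component simplifies to $\id_{\Ms X} \circ \inl^{X+MX} = \inl^{X+MX}$, while the second becomes
\[
\inr^{X+MX} \circ \mu_X \circ M[\eta_X,\id_{MX}] \circ M\inl^{X+MX}
= \inr^{X+MX} \circ \mu_X \circ M\eta_X,
\]
where I used functoriality of $M$ and the defining equation $[\eta_X,\id_{MX}] \circ \inl = \eta_X$. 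The monad unit axiom $\mu_X \circ M\eta_X = \id_{MX}$ then collapses this to $\inr^{X+MX}$. Pairing the two components gives $[\inl^{X+MX},\inr^{X+MX}] = \id_{\Ms X}$, as required.

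The only real obstacle is bookkeeping the two different coproducts in play, namely $X + MX$ (inside $\Ms X$) and $\Ms X + M\Ms X$ (inside $\Ms\Ms X$), and making sure the injections $\inl, \inr$ are disambiguated by their codomains. Once this is in place, both triangles reduce to short calculations, and the entire lemma uses only the left unit law of $M$.
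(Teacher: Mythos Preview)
Your proposal is correct and follows essentially the same approach as the paper: both verify the two triangles separately, with the right triangle collapsing immediately from $\mus \circ \inl = \id$, and the left triangle reducing via the coproduct identities and the monad unit law $\mu \circ M\eta = \id_M$ to $[\inl,\inr] = \id$. The only cosmetic difference is that the paper applies identity \eqref{eqn:coprod-2} directly to $\mus \circ (\inl + M\inl)$, whereas you first unfold $f+g$ as $[\inl\circ f,\inr\circ g]$ and then use \eqref{eqn:coprod-1}; these are of course equivalent.
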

\begin{proof}
    First, we show the left hand side commutes using the identities \eqref{eqn:coprod-1}-\eqref{eqn:coprod-4} and the left of \eqref{diag:unitmonad}.
    \begin{align*}
        \mus \circ \Ms\etas &= \mus \circ (\inl^{\id+M} +M\inl^{\id+M}) & \\
        &= [\id_{\id+M} \circ \inl^{\id+M}, \inr^{\id+M} \circ \mu \circ M[\eta,\id_M] \circ M\inl^{\id+M}] & \text{by  \eqref{eqn:coprod-2}}\\
        &= [\inl^{\id+M}, \inr^{\id+M} \circ \mu \circ M\eta] & \text{by } [\eta,\id_M]\circ\inl^{\id+M}=\eta \\
        &=  [\inl^{\id+M}, \inr^{\id+M}] & \text{by \eqref{diag:unitmonad}} \\ 
        &= \id_{\id+M} & 
    \end{align*}

    Next, we show the right hand side commutes.
    \begin{align*}
        \mus \circ \etas\Ms &= [\id_{\id+M},\inr^{\id+M} \circ \mu \circ M[\eta,\id_M]] \circ \inl^{\Ms + M\Ms}\\
        &= \id_{\id+M}
    \end{align*}
\end{proof}
\begin{lemma}\label{lem:assoc-mus}
    The multiplication $\mus$ is associative, i.e., $\mus \circ \mus\Ms = \mus \circ \Ms\mus$.
\end{lemma}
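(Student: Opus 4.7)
The plan is to decompose the domain as $\Ms\Ms\Ms = \Ms\Ms + M\Ms\Ms$ along the outermost $\Ms = \id + M$, and verify that $\mus \circ \mus\Ms$ and $\mus \circ \Ms\mus$ agree after precomposition with each of the two injections of this decomposition. This is natural because the two summands in the definition $\mus = [\id_{\id+M},\, \inr^{\id+M} \circ \mu \circ M[\eta,\id_M]]$ behave very differently, and the case split will align with that structure.

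On the $\inl$ branch, the identity component of $\mus$ takes over directly: we have $\mus\Ms \circ \inl = \id_{\Ms\Ms}$ by definition of $\mus$, while $\Ms\mus \circ \inl = \inl^{\Ms + M\Ms} \circ \mus$ by the functoriality identity $\Ms f = f + Mf$ together with \eqref{eqn:coprod-3}. Post-composition with $\mus$ then produces $\mus$ on both sides, using $\mus \circ \inl^{\Ms + M\Ms} = \id_\Ms$. This case is routine and requires only the coproduct identities.

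The $\inr$ branch is where the content lies. Abbreviating $\sigma := [\eta, \id_M] : \Ms \Rightarrow M$, so that $\mus = [\id_\Ms,\, \inr \circ \mu \circ M\sigma]$, one side unfolds to $\inr \circ \mu \circ M\sigma \circ \mu_\Ms \circ M\sigma_\Ms$ and the other to $\inr \circ \mu \circ M(\sigma \circ \mus)$. I would equate them by applying naturality of $\mu$ at $\sigma$ to rewrite $M\sigma \circ \mu_\Ms$ as $\mu_M \circ MM\sigma$, and then the monad associativity $\mu \circ \mu_M = \mu \circ M\mu$, so that everything under the outer $M$ collects into a single expression. This reduces the remaining work to the auxiliary identity $\mu \circ M\sigma \circ \sigma_\Ms = \sigma \circ \mus$ between morphisms $\Ms\Ms \Rightarrow M$. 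That identity itself can be checked componentwise on $\Ms\Ms = \Ms + M\Ms$: the $M\Ms$ summand yields $\mu \circ M\sigma$ on both sides directly, while the $\Ms$ summand reduces to $\sigma$ on both sides, using naturality of $\eta$ at $\sigma$ (giving $M\sigma \circ \eta_\Ms = \eta_M \circ \sigma$) together with the monad unit law $\mu \circ \eta_M = \id_M$.

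The main obstacle is really just the bookkeeping: distinguishing $\mu_X$ from $\mu_{\Ms X}$, keeping track of which coproduct each $\inl$ and $\inr$ refers to, and noticing at each step whether a rewrite uses naturality of a monad transformation, the monad associativity, or a coproduct identity. Once the auxiliary equation $\mu \circ M\sigma \circ \sigma_\Ms = \sigma \circ \mus$ is extracted, the argument reduces cleanly to the three monad axioms for $M$ and the coproduct identities \eqref{eqn:coprod-1}--\eqref{eqn:coprod-4}.
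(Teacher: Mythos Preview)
Your proposal is correct and follows essentially the same route as the paper: decompose along the outermost coproduct, handle the $\inl$ component trivially, and reduce the $\inr$ component via naturality and associativity of $\mu$ to an auxiliary identity (the paper's cell (a) in diagram~\eqref{diag:pavingproofmonadMs}) that is then checked componentwise using naturality of $\eta$ and the unit law. Your abbreviation $\sigma = [\eta,\id_M]$ streamlines the bookkeeping and makes the auxiliary identity $\mu \circ M\sigma \circ \sigma_{\Ms} = \sigma \circ \mus$ recognizable as the multiplication-compatibility of the monad morphism $\sigma\colon \Ms \Rightarrow M$ mentioned later in the paper.
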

\begin{proof}
    Let us first expand both sides.
    \begin{align*}
        \text{L.H.S.} &= \mus \circ \mus\Ms & \\
        &= \mus \circ [\id_{\Ms+M\Ms}, \inr^{\Ms+M\Ms} \circ \mu\Ms \circ M[\eta\Ms, \id_{M\Ms}]] & \text{by def. }\mus\\
        &= [\mus, \mus \circ \inr^{\Ms+M\Ms} \circ \mu\Ms \circ M[\eta\Ms, \id_{M\Ms}]] & \text{by \eqref{eqn:coprod-1}} \\
        &= [\mus, \inr^{\id+M} \circ \mu \circ M[\eta,\id_M] \circ \mu\Ms \circ M[\eta\Ms, \id_{M\Ms}]] & \hspace{-6em}\text{by def. }\mus\text{ and } [k_X,k_Y] \circ \inr = k_Y\\    
        \text{R.H.S.} &= \mus \circ \Ms\mus &\\
        &= [\id_{\id+M}, \inr^{\id+M} \circ \mu \circ M[\eta,\id_M]] \circ (\mus + M\mus)&  \text{by def. }\mus\\
        &= [\mus,\inr^{\id+M} \circ \mu \circ M[\eta,\id_M] \circ M\mus]  & \text{by \eqref{eqn:coprod-2}}\\
        &= [\mus,\inr^{\id+M} \circ \mu \circ M([\eta,\id_M] \circ\mus)]  &\\
        &= [\mus,\inr^{\id+M} \circ \mu \circ M[[\eta,\id_M] \circ \id_{\id+M}, [\eta,\id_M] \circ \inr^{\id+M} \circ \mu \circ M[\eta,\id_M]] &  \text{by \eqref{eqn:coprod-1}} \\
        &= [\mus,\inr^{\id+M} \circ \mu \circ M[[\eta,\id_M], \mu \circ M[\eta,\id_M]] &\hspace{-6em} \text{by } [\eta,\id_M]\circ\inr^{\id+M}=\id_M
    \end{align*}
    Since the left component is the same, it is enough to prove the right component is also the same, i.e. that 
    \[\inr^{\id+M} \circ \mu \circ M[\eta,\id_M] \circ \mu\Ms \circ M[\eta\Ms, \id_{M\Ms}] = \inr^{\id+M} \circ \mu \circ M[[\eta,\id_M], \mu \circ M[\eta,\id_M]].\]
    We pave the following diagram.
    \begin{equation}\label{diag:pavingproofmonadMs}
\begin{tikzcd}
	{M(\id+M+M(\id+M))} &&&&& MM \\
	{MM(\id+M)} && MMM & MM && M \\
	{M(\id+M)} && MM & M && {\id+M}
	\arrow["{M[[\eta,\id_M],\mu \circ M[\eta,\id_M]]}", from=1-1, to=1-6]
	\arrow["\mu", from=1-6, to=2-6]
	\arrow["{\inr^{\id+M}}", from=2-6, to=3-6]
	\arrow[""{name=0, anchor=center, inner sep=0}, "{M[\eta\Ms,\id_{M\Ms}]}"', from=1-1, to=2-1]
	\arrow["\mu\Ms"', from=2-1, to=3-1]
	\arrow[""{name=1, anchor=center, inner sep=0}, "{M[\eta,\id_M]}"', from=3-1, to=3-3]
	\arrow[""{name=2, anchor=center, inner sep=0}, "\mu"', from=3-3, to=3-4]
	\arrow["{\inr^{\id+M}}"', from=3-4, to=3-6]
	\arrow[""{name=3, anchor=center, inner sep=0}, "{MM[\eta,\id_M]}", from=2-1, to=2-3]
	\arrow["{\mu M}"', from=2-3, to=3-3]
	\arrow[""{name=4, anchor=center, inner sep=0}, "M\mu"', from=2-3, to=2-4]
	\arrow["\mu"', from=2-4, to=3-4]
	\arrow[""{name=5, anchor=center, inner sep=0}, "M\mu"', from=2-3, to=1-6]
	\arrow["{\text{(a)}}"{description}, Rightarrow, draw=none, from=0, to=5]
	\arrow["{\text{(b)}}"{description}, Rightarrow, draw=none, from=3, to=1]
	\arrow["{\text{(c)}}"{description}, Rightarrow, draw=none, from=4, to=2]
\end{tikzcd}
    \end{equation}
    We show (a) below, (b) commutes by naturality of $\mu$ and (c) commutes by associativity of $\mu$. To prove (a) commutes, we can remove one application on $M$ on every morphism and starting with the bottom path, we have the following derivation.
    \begin{align*}
        \mu \circ M[\eta,\id_M] \circ [\eta\Ms,\id_{M\Ms}] &= [\mu \circ M[\eta,\id_M] \circ \eta\Ms, \mu \circ M[\eta,\id_M]] &  \text{by \eqref{eqn:coprod-1}}\\
        &= [\mu \circ \eta M \circ [\eta,\id_M], \mu \circ M[\eta,\id_M]] & \text{by nat. }\eta\\
        &= [[\eta,\id_M], \mu \circ M[\eta,\id_M]] & \text{by  \eqref{diag:unitmonad}} 
    \end{align*}
\end{proof}
Combining Lemmas~\ref{lem:unit-mus} and~\ref{lem:assoc-mus}, we obtain
the following result.
\begin{theorem}\label{thm:semifreemonad}
    The triple $(\Ms,\etas,\mus)$ is a monad.
\end{theorem}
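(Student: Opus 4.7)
The plan is very direct: the definition of a monad requires the endofunctor $\Ms$ to come equipped with natural transformations $\etas$ and $\mus$ satisfying the unit diagram~\eqref{diag:unitmonad} and the associativity diagram~\eqref{diag:multmonad} applied to $\Ms$. The endofunctor $\Ms = \id_{\mathbf{C}} + M$ is well-defined as a pointwise coproduct of endofunctors on a category with coproducts. The naturality of $\etas = \inl^{\id+M}$ is immediate from the naturality of coproduct injections, and the naturality of $\mus$ follows because it is assembled from natural transformations ($\id$, $\inr^{\id+M}$, $\mu$, $\eta$) using operations that preserve naturality (whiskering with $M$, the coproduct mediating morphism, and composition).

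The two non-trivial axioms are precisely the contents of the lemmas already established. Lemma~\ref{lem:unit-mus} gives the unit diagram~\eqref{diag:unitdiagMs}, which instantiates diagram~\eqref{diag:unitmonad} for $\Ms$ with unit $\etas$ and multiplication $\mus$. Lemma~\ref{lem:assoc-mus} gives the associativity $\mus \circ \mus\Ms = \mus \circ \Ms\mus$, which is diagram~\eqref{diag:multmonad} instantiated at $\Ms$.

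Consequently, the proof should simply be a one-line assembly: cite Lemmas~\ref{lem:unit-mus} and~\ref{lem:assoc-mus} and observe that together they verify all of the monad axioms. There is no genuine obstacle here, since the computational work — unfolding $\mus$ as a copair, pushing through the coproduct identities \eqref{eqn:coprod-1}--\eqref{eqn:coprod-4}, applying naturality of $\eta$, and invoking the monad laws for $M$ — has already been discharged in the two lemmas. The only stylistic care needed is to make clear to the reader that ``combining'' the lemmas really does yield every clause of the monad definition, which amounts to pointing out that \eqref{diag:unitdiagMs} is literally \eqref{diag:unitmonad} for $\Ms$ and that the equality in Lemma~\ref{lem:assoc-mus} is literally \eqref{diag:multmonad} for $\Ms$.
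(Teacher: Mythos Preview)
Your proposal is correct and matches the paper's approach exactly: the paper simply states that combining Lemmas~\ref{lem:unit-mus} and~\ref{lem:assoc-mus} yields the theorem, without a separate proof environment. Your additional remarks on the naturality of $\etas$ and $\mus$ are more explicit than the paper's treatment but entirely appropriate.
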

Next, we show that this is the semifree monad on $M$.
\begin{theorem}\label{thm:isosemialgalg}
    There is an isomorphism $\EM(\Ms) \cong \EMs(M)$.
\end{theorem}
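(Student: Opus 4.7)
The plan is to construct functors $F \colon \EM(\Ms) \to \EMs(M)$ and $G \colon \EMs(M) \to \EM(\Ms)$ that act as the identity on underlying morphisms and are mutually inverse. I would begin by analyzing an $\Ms$-algebra structure $a \colon X + MX \to X$: writing $a = [a_1, a_2]$ via the universal property of the coproduct, the unit axiom $a \circ \etas_X = \id_X$ forces $a_1 = \id_X$, so $a$ is entirely determined by $a_2 = a \circ \inr \colon MX \to X$. Accordingly I define $F(X, a) = (X, a_2)$ and $G(X, a_2) = (X, [\id_X, a_2])$.

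Next I need to show that $\Ms$-associativity and $M$-semi-associativity correspond. Expanding both sides of $a \circ \mus_X = a \circ \Ms a$ with the identities \eqref{eqn:coprod-1}--\eqref{eqn:coprod-4}, exactly as in the proof of Lemma~\ref{lem:assoc-mus}, one finds that the first components automatically agree (using $a_1 = \id_X$) while the second components amount to the equation
\[
a_2 \circ \mu_X \circ M[\eta_X, \id_{MX}] = a_2 \circ M[\id_X, a_2].
\]
Precomposing this with $M \inr^{X+MX} \colon MMX \to M(X+MX)$ and simplifying via \eqref{eqn:coprod-4} immediately yields $a_2 \circ \mu_X = a_2 \circ Ma_2$. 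This shows $F$ is well-defined and also establishes that the two associativity axioms coincide after a ``one-way'' reduction.

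The main obstacle is the converse: showing that $G$ lands in $\EM(\Ms)$, i.e.\ that $M$-semi-associativity alone implies the boxed equation. Here I plan to apply associativity to rewrite the left-hand side as $a_2 \circ Ma_2 \circ M[\eta_X, \id_{MX}] = a_2 \circ M[u, a_2]$, where $u := a_2 \circ \eta_X$. Two auxiliary identities then suffice: first, $u \circ a_2 = a_2$, obtained by a short calculation using naturality of $\eta$ (so that $\eta_X \circ a_2 = Ma_2 \circ \eta_{MX}$), semialg-associativity, and the monad unit axiom $\mu_X \circ \eta_{MX} = \id_{MX}$; and second, $a_2 \circ Mu = a_2$, obtained analogously using $\mu_X \circ M\eta_X = \id_{MX}$. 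The first identity gives $[u, a_2] = [u, u \circ a_2] = u \circ [\id_X, a_2]$ via \eqref{eqn:coprod-1}, and the second then collapses $a_2 \circ Mu \circ M[\id_X, a_2]$ to $a_2 \circ M[\id_X, a_2]$, closing the chain.

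Finally I would verify that a morphism $h \colon X \to Y$ satisfies $h \circ [\id_X, a_2] = [\id_Y, b_2] \circ \Ms h$ if and only if $h \circ a_2 = b_2 \circ Mh$; this is a one-line application of \eqref{eqn:coprod-2} together with the trivial equation on the $\id$-summand. Since $F$ and $G$ are manifestly mutually inverse on objects and both act as the identity on morphisms, they exhibit the desired isomorphism $\EM(\Ms) \cong \EMs(M)$.
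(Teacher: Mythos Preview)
Your proposal is correct and essentially identical to the paper's proof: the same decomposition $a=[\id_X,a_2]$, the same reduced equation $a_2\circ\mu_X\circ M[\eta_X,\id_{MX}]=a_2\circ M[\id_X,a_2]$, the same precomposition with $M\inr$ for the forward direction, the same two auxiliary identities $a_2\circ Mu=a_2$ and $u\circ a_2=a_2$ (with $u=a_2\circ\eta_X$) for the converse, and the same check on homomorphisms. The only cosmetic difference is that you run the converse chain from left to right while the paper runs it from right to left.
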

\begin{proof}
    First, note that any $\Ms$-algebra $\alpha : X+MX \rightarrow X$ must be of the form $\alpha = [\id_X,a]$ since $\id_X = \alpha \circ \etas_X = \alpha \circ \inl^{X+MX}$. Next, we claim the commutativity of the two following diagrams is equivalent, i.e., $[\id_X,a]$ is an $\Ms$-algebra if and only if $a$ is an $M$-semialgebra.\\
    \begin{minipage}{0.59\textwidth}
        \begin{equation}\label{diag:algMsiso}
            \begin{tikzcd}
            {X+MX+M(X+MX)} & {X+MX} \\
            {X+MX} & X
            \arrow["{[\id_X,a]+M[\id_X,a]}"', from=1-1, to=2-1]
            \arrow["{[\id_X,a]}"', from=2-1, to=2-2]
            \arrow["{\mus_X}", from=1-1, to=1-2]
            \arrow["{[\id_X,a]}", from=1-2, to=2-2]
            \end{tikzcd}
        \end{equation}
    \end{minipage}\begin{minipage}{0.40\textwidth}
        \begin{equation}\label{diag:semialgMiso}
            \begin{tikzcd}
                MMX & MX \\
                MX & X
                \arrow["Ma"', from=1-1, to=2-1]
                \arrow["a"', from=2-1, to=2-2]
                \arrow["{\mu_X}", from=1-1, to=1-2]
                \arrow["a", from=1-2, to=2-2]
            \end{tikzcd}
        \end{equation}
    \end{minipage}\\
    Since the bottom path of \eqref{diag:algMsiso} simplifies to $[[\id_X,a],a \circ M[\id_X,a]]$ and the top path simplifies to $[[\id_X,a],a \circ \mu_X \circ M[\eta_X,\id_{MX}]]$, we infer that the left square commutes if and only if 
    \begin{equation}\label{eqn:equivMsalg}
        a \circ M[\id_X, a] = a \circ \mu_X \circ M[\eta_X,\id_{MX}].
    \end{equation}
    Next, if \eqref{eqn:equivMsalg} holds, we pre-compose by $M\inr^{X+MX}$ and find that 
    \[a \circ Ma =a \circ M[\id_X,a] \circ M\inr^{X+MX} = a \circ \mu_X \circ M[\eta_X,\id_{MX}] \circ M\inr^{X+MX}= a \circ \mu_X.\]
    Conversely, if \eqref{diag:semialgMiso} commutes ($a \circ Ma = a \circ \mu_X$), we can derive the following two equalities.
    \begin{gather*}
        a \circ Ma \circ M\eta_X = a \circ \mu_X \circ M\eta_X = a\\
        a \circ \eta_X \circ a = a \circ Ma \circ \eta_{MX} = a \circ \mu_X \circ \eta_{MX} = a,
    \end{gather*}
    which lead to the following derivation showing \eqref{eqn:equivMsalg} holds.
    \begin{align*}
        a \circ M[\id_X,a] &= a \circ Ma \circ M\eta_X \circ M[\id_X,a] & \text{by 1st eqn. above} \\
        &= a \circ M[ a \circ \eta_X, a \circ \eta_X \circ a] &\text{by \eqref{eqn:coprod-1}}\\
        &= a \circ M[a \circ \eta_X, a] & \text{by 2nd eqn. above}\\
        &= a \circ Ma \circ  M[\eta_X,\id_{MX}]&\text{by \eqref{eqn:coprod-1}}\\
        &= a \circ \mu_X \circ M[\eta_X,\id_{MX}] &\text{by \eqref{diag:algmult}}
    \end{align*}

    We have shown that the assignments $[\id_X,a] \mapsto a$ and $a \mapsto [\id_X,a]$ are well-typed, and they are clearly inverses. It is left to show they are functorial. It is enough to show that a homomorphism between $[\id_X,a]$ and $[\id_X, b]$ is a homomorphism between $a$ and $b$ and vice versa.

    Suppose, $f \circ [\id_X,a] = [\id_X, b] \circ (f+Mf)$, then pre-composing with $\inr^{X+MX}$ yields $f \circ a = b \circ Mf$. Conversely, if $f \circ a = b \circ Mf$, we have 
    \[f \circ [\id_X,a] = [f,f \circ a] = [f, b \circ Mf]= [\id_X, b] \circ (f+Mf).\]
    We conclude the desired isomorphism.
\end{proof}

\begin{remark}\label{rem:idempot}
  From the proof of the above theorem, we can also infer the following
  fact, which plays an important role in~\cite{Garner} and in the
  concrete presentations by operations and equations provided in
  Section~\ref{exmps}. Given a semialgebra $a \colon MX\to X$ in
  $\EMs(M)$, we have that $a\circ\eta_X\colon X\to X$ is
  idempotent, i.e., $(a \circ \eta_X) \circ (a \circ \eta_X) = a \circ \eta_X$.
\end{remark}
\section{Weak Distributive Laws are Strong}\label{dllifts}
In this section, we will give an analogue to Proposition \ref{prop:isodllifts} in the case of weak distributive laws. Then, we will use the isomorphism $\EMs(M) \cong \EM(\Ms)$ to obtain a correspondence between weak distributive laws $MT \Rightarrow TM$ and (strong) distributive laws $\Ms T \Rightarrow T\Ms$ satisfying an additional constraint.

Let $(M,\eta^M,\mu^M)$ and $(T,\eta^T,\mu^T)$ be monads on a category $\mathbf{C}$ with all coproducts.

If we try to apply the same construction from Proposition
\ref{prop:isodllifts} to a \textit{weak} distributive law $\lambda: MT
\Rightarrow TM$, we quickly encounter a problem when proving that for
any $M$-algebra $x:MX \rightarrow X$, $\widetilde{T}x := Tx \circ
\lambda_X$ is an $M$-algebra. Indeed, showing that
\eqref{diag:algunit} commutes relies on the derivation \[Tx \circ
\lambda_X \circ \eta^M_{TX} = Tx \circ T\eta^M_X = T\id,\] which needs
\eqref{diag:dlunitM} to commute, hence $\lambda$ to be strong. We will
see in Theorem~\ref{thm:wdlliftings} that this is the only obstacle to
construct a lifting, namely that the lifting $\widetilde{T}$ is now on
$\EMs(M)$. Notice that the construction of the lifting $\widetilde{T}$
to semialgebras obtained from a weak distributive law also appears in
the proof of~\cite[Proposition~13]{Garner}. However,
Theorem~\ref{thm:wdlliftings} takes this further and characterizes the
liftings $\widetilde{T}$ on $\EMs(M)$ that correspond to weak
distributive laws.

Indeed, in the other direction, there is another issue when showing
that the transformation obtained from a lifting on $\EMs(M)$ makes
\eqref{diag:dlmultMhat} commute. In the setting of Proposition
\ref{prop:isodllifts}, we can use the fact that $\widetilde{T}\mu^M_X$
satisfies the unit axiom of an $M$-algebra, but this is not
necessarily the case here. Therefore, we must add the restriction
\eqref{eqn:hypothesislifting} to the liftings on $\EMs(M)$ to obtain
the correspondence in Theorem~\ref{thm:wdlliftings}.

Let us try to provide some intuition behind the
condition~\eqref{eqn:hypothesislifting} featured in this
theorem. Recall from Remark~\ref{rem:idempot} that, whenever
$\alpha\colon MA\to A$ is a semialgebra for $M$, then
$\alpha\circ\eta_A$ is an idempotent, that we will denote here by
$\mathsf{a}\colon A\to A$. Also recall that $\widetilde{T}\alpha\colon
MTA\to TA$ is the semialgebra obtained by applying $\widetilde{T}$ to
$\alpha$. Then condition~\eqref{eqn:hypothesislifting} roughly means that,
given any term $t(x_1,\ldots, x_n)$ in $MTA$, we have that
$\widetilde{T}\alpha(t(x_1,\ldots,
x_n))=\widetilde{T}\alpha(t(\mathsf{a}x_1,\ldots, \mathsf{a}x_n))$,
that is, applying the idempotent $\mathsf{a}$ to the leaves of any term in $MTA$
does not change the evaluation of that term under the semialgebra
$\widetilde{T}\alpha$.

\begin{theorem}\label{thm:wdlliftings}
    Weak distributive laws $\lambda: MT \Rightarrow TM$ are in correspondence with liftings $(\widetilde{T},\widetilde{\eta},\widetilde{\mu})$ of $T$ to $\EMs(M)$ such that for any $M$-semialgebra $\alpha:MA \rightarrow A$, 
    \begin{equation}\label{eqn:hypothesislifting}
        \widetilde{T}\alpha = \widetilde{T}\alpha \circ MT\alpha \circ MT\eta^M_A.
    \end{equation}
\end{theorem}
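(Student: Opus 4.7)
The plan is to adapt the Beck-style argument of Proposition~\ref{prop:isodllifts} to the semialgebra setting. Given a weak distributive law $\lambda$, define a lifting by $\widetilde{T}(X,x) := (TX,\, Tx\circ\lambda_X)$ and $\widetilde{T}f := Tf$. Conversely, given a lifting $\widetilde{T}$ on $\EMs(M)$, define $\lambda_X := \widetilde{T}\mu^M_X \circ MT\eta^M_X$, where $\widetilde{T}\mu^M_X : MTMX \to TMX$ arises by applying $\widetilde{T}$ to the free algebra $(MX,\mu^M_X)$. The verification has three parts: each assignment is well-defined (including condition~\eqref{eqn:hypothesislifting}), the remaining weak distributive law axioms hold in the reverse direction, and the two assignments are mutually inverse.

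For the first direction, the proof that $Tx\circ\lambda_X$ is a semialgebra uses only \eqref{diag:dlmultM}, the associativity of $x$, and naturality of $\lambda$; crucially, no appeal is made to \eqref{diag:dlunitM}, which explains why we land in $\EMs(M)$ rather than $\EM(M)$. That the unit and multiplication of $T$ lift to semialgebra homomorphisms follows from \eqref{diag:dlunitMhat} and \eqref{diag:dlmultMhat} respectively, and the monad axioms on $\EMs(M)$ follow from those of $T$ since $\forgets$ is faithful. Condition~\eqref{eqn:hypothesislifting} for $\widetilde{T}\alpha = T\alpha\circ\lambda_A$ is verified by a short calculation: naturality of $\lambda$ turns $\lambda_A\circ MT\alpha\circ MT\eta^M_A$ into $TM\alpha\circ TM\eta^M_A\circ\lambda_A$, the associativity of $\alpha$ rewrites $\alpha\circ M\alpha$ as $\alpha\circ\mu^M_A$, and the monad unit $\mu^M\circ M\eta^M=\id$ finishes the job.

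The main technical work is in the reverse direction. Naturality of $\lambda$, \eqref{diag:dlunitMhat}, and \eqref{diag:dlmultM} follow from routine diagram chases exploiting that $Mf$, $\eta^T_{MX}$, and $\mu^M_X$ are semialgebra homomorphisms between (iterated) free algebras, together with the monad unit identity for $M$. The main obstacle is \eqref{diag:dlmultMhat}: using that $\mu^T_{MX}$ is a semialgebra homomorphism $(TTMX,\widetilde{T}\widetilde{T}\mu^M_X)\to(TMX,\widetilde{T}\mu^M_X)$ and naturality of $\mu^T$, the axiom reduces to proving the identity
\[
\widetilde{T}\widetilde{T}\mu^M_X \circ MTT\eta^M_X \;=\; T\widetilde{T}\mu^M_X \circ TMT\eta^M_X \circ \widetilde{T}\mu^M_{TX} \circ MT\eta^M_{TX}.
\]
This is precisely where~\eqref{eqn:hypothesislifting} is essential: applying it to the semialgebra $\alpha=\widetilde{T}\mu^M_X$ on $TMX$ inserts $MT\widetilde{T}\mu^M_X \circ MT\eta^M_{TMX}$ on the left, after which two applications of functoriality of $\widetilde{T}$ — one on the semialgebra homomorphism $\widetilde{T}\mu^M_X:(MTMX,\mu^M_{TMX})\to(TMX,\widetilde{T}\mu^M_X)$, and one on $MT\eta^M_X:(MTX,\mu^M_{TX})\to(MTMX,\mu^M_{TMX})$ (a homomorphism by naturality of $\mu^M$) — plus naturality of $\eta^M$, yield the right-hand side.

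For bijectivity, starting from $\lambda$ and coming back gives $T\mu^M_X \circ \lambda_{MX} \circ MT\eta^M_X = \lambda_X$ by naturality of $\lambda$ and the monad unit. Starting from a lifting $\widetilde{T}$, the reconstructed structure on $TX$ is $Tx\circ\lambda_X = Tx\circ\widetilde{T}\mu^M_X\circ MT\eta^M_X$; functoriality of $\widetilde{T}$ on the semialgebra homomorphism $x:(MX,\mu^M_X)\to(X,x)$ rewrites this as $\widetilde{T}x\circ MTx\circ MT\eta^M_X$, and condition~\eqref{eqn:hypothesislifting} then collapses it to $\widetilde{T}x$, recovering the original lifting. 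Thus~\eqref{eqn:hypothesislifting} is exactly the datum pinning down the liftings that correspond to weak distributive laws.
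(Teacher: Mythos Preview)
Your proposal is correct and follows essentially the same approach as the paper's own proof: the same constructions in both directions, the same reduction of axiom~\eqref{diag:dlmultMhat} via naturality of $\mu^T$ and the homomorphism property of $\widetilde{\mu}$ to the key identity you display, and the same use of~\eqref{eqn:hypothesislifting} (applied to $\alpha=\widetilde{T}\mu^M_X$) together with functoriality of $\widetilde{T}$ on the homomorphisms $\widetilde{T}\mu^M_X$ and $MT\eta^M_X$ to establish it. The bijectivity argument is likewise identical to the paper's.
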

\begin{proof}
    A full proof is given in the appendix.  The correspondence is
    given by the same assignments as in Proposition
    \ref{prop:isodllifts} and in fact, the proof will closely follow
    the one in  \cite[Chapter~3]{Tanakathesis} except for some minor
    steps mentioned above which rely on the distributive law being
    strong and the objects of $\EM(M)$ satisfying the unit axiom.
\end{proof}

Next, we combine the characterization of liftings to semialgebras
coming from weak distributive laws and the characterization of
semialgebras as algebras for the semifree monad to obtain Theorem
\ref{thm:wdlanddl}. For that, we need to describe how the isomorphism
$\EMs(M) \cong \EM(\Ms)$ leads to a correspondence between liftings of
$T$ to these categories.
\begin{lemma}\label{lem:liftingsemems}
    Liftings of $T$ to $\EMs(M)$ are in correspondence with liftings
    of $T$ to $\EM(\Ms)$.
\end{lemma}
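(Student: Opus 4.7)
The plan is to leverage the isomorphism of categories $\Phi\colon \EMs(M) \xrightarrow{\cong} \EM(\Ms)$ constructed in Theorem~\ref{thm:isosemialgalg} and simply transport liftings back and forth along $\Phi$. Recall that on objects $\Phi$ sends a semialgebra $(X, a\colon MX\to X)$ to the algebra $(X, [\id_X, a]\colon X+MX\to X)$, and on morphisms $\Phi$ is the identity on the underlying arrow. Writing $\Psi$ for the inverse (which projects $[\id_X, a]$ to $a$ by precomposing with $\inr^{X+MX}$), both $\Phi$ and $\Psi$ are the identity on underlying morphisms in $\mathbf{C}$, so the forgetful functors satisfy $\forgetMs \circ \Phi = \forgets$ and $\forgets \circ \Psi = \forgetMs$.

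Given a lifting $(\widetilde{T}_s, \widetilde{\eta}_s, \widetilde{\mu}_s)$ of $T$ to $\EMs(M)$, I would define the corresponding lifting on $\EM(\Ms)$ by conjugation:
\[
\widetilde{T} := \Phi \circ \widetilde{T}_s \circ \Psi, \qquad \widetilde{\eta} := \Phi \widetilde{\eta}_s \Psi, \qquad \widetilde{\mu} := \Phi \widetilde{\mu}_s \Psi.
\]
Since $\Phi$ and $\Psi$ are mutually inverse isomorphisms of categories, $(\widetilde{T}, \widetilde{\eta}, \widetilde{\mu})$ is automatically a monad on $\EM(\Ms)$. To verify it is a lifting of $T$, observe that
\[
\forgetMs \circ \widetilde{T} = \forgetMs \circ \Phi \circ \widetilde{T}_s \circ \Psi = \forgets \circ \widetilde{T}_s \circ \Psi = T \circ \forgets \circ \Psi = T \circ \forgetMs,
\]
using successively the relation $\forgetMs \circ \Phi = \forgets$, the lifting property of $\widetilde{T}_s$, and $\forgets \circ \Psi = \forgetMs$. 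The analogous calculations show $\forgetMs \widetilde{\eta} = \eta^T \forgetMs$ and $\forgetMs \widetilde{\mu} = \mu^T \forgetMs$. The inverse assignment is obtained symmetrically by conjugating a lifting on $\EM(\Ms)$ by $\Psi$ and $\Phi$.

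Finally, I would check that these two assignments are mutually inverse, which is immediate from $\Psi\Phi = \id_{\EMs(M)}$ and $\Phi\Psi = \id_{\EM(\Ms)}$: conjugating twice amounts to conjugating by the identity. I do not anticipate any real obstacle here beyond this bookkeeping; the only subtlety worth remarking is that because $\Phi$ acts as the identity on the underlying arrows of $\mathbf{C}$, the compatibility of a lifting with $\forgets$ and with $\forgetMs$ are literally the same condition, so the monad laws and lifting conditions on one side translate verbatim to the other.
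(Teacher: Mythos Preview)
Your proposal is correct and follows exactly the same approach as the paper: conjugate by the isomorphism $S\colon \EMs(M)\cong \EM(\Ms)$ (your $\Phi$, $\Psi$), using that it commutes with the forgetful functors, to transport liftings back and forth. Your write-up is in fact more detailed than the paper's, which simply records the commutativity $\forgetMs S = \forgets$ and declares the verification ``straightforward''.
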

\begin{proof}
    Note that the isomorphism $S:\EMs(M) \cong \EM(\Ms):S^{-1}$
    described in Theorem~\ref{thm:isosemialgalg} commutes with the
    forgetful functors $\forgets$ and $\forgetMs$, namely, the
    following diagram commutes.
    \begin{equation}\label{diag:isocommuteforget}
        \begin{tikzcd}
            {\EMs(M)} && {\EM(\Ms)} \\
            & {\mathbf{C}}
            \arrow["\forgets"', from=1-1, to=2-2]
            \arrow["\forgetMs", from=1-3, to=2-2]
            \arrow["S", shift left=1, from=1-1, to=1-3]
            \arrow["{S^{-1}}", shift left=1, from=1-3, to=1-1]
        \end{tikzcd}
    \end{equation}
    Therefore, it is straightforward to check that if $\widetilde{T}$ is a lifting of $T$ on $\EMs(M)$, then $\widetilde{T}^{\mathrm{s}}:= S\widetilde{T}S^{-1}$ is a lifting of $T$ on $\EM(\Ms)$ and conversely if $\widetilde{T}^{\mathrm{s}}$ is a lifting of $T$ on $\EM(\Ms)$, then $S^{-1}\widetilde{T}^{\mathrm{s}}S$ is a lifting of $T$ on $\EMs(M)$. 
\end{proof}
\begin{theorem}\label{thm:wdlanddl}
    Weak distributive laws $\lambda: MT \Rightarrow TM$ are in correspondence with distributive laws $\delta: \Ms T \Rightarrow  T\Ms$ satisfying
    \begin{equation}\label{eqn:restrictdlMs}
        \delta \circ \inr^{T+MT} = T[\inr^{\id+M} \circ \eta^M, \inr^{\id+M}] \circ \delta \circ \inr^{T+MT}.
    \end{equation}
\end{theorem}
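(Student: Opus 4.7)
The plan is to combine the three correspondences already established in the excerpt. By Theorem~\ref{thm:wdlliftings}, weak distributive laws $\lambda : MT \Rightarrow TM$ correspond to liftings $\widetilde{T}$ of $T$ to $\EMs(M)$ satisfying \eqref{eqn:hypothesislifting}. By Lemma~\ref{lem:liftingsemems}, liftings on $\EMs(M)$ are in bijection with liftings on $\EM(\Ms)$. By Proposition~\ref{prop:isodllifts} applied to the monad $\Ms$, liftings on $\EM(\Ms)$ correspond to strong distributive laws $\delta : \Ms T \Rightarrow T\Ms$. Composing these three bijections already yields a correspondence between weak laws $MT \Rightarrow TM$ and strong laws $\Ms T \Rightarrow T\Ms$, so what remains is to identify the subclass of strong laws whose image under the composition satisfies \eqref{eqn:hypothesislifting}, and to show this subclass is exactly those satisfying \eqref{eqn:restrictdlMs}.

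The first concrete step is to trace the chain and obtain explicit formulas. Given a strong law $\delta$, Beck's construction yields a lifting $\widetilde{T}^{\mathrm{s}}$ on $\EM(\Ms)$ sending an $\Ms$-algebra $[\id_A, \alpha] : \Ms A \to A$ (the image of a semialgebra $\alpha : MA \to A$ under the isomorphism of Theorem~\ref{thm:isosemialgalg}) to $T[\id_A, \alpha] \circ \delta_A$. Applying the inverse isomorphism, the corresponding lifting $\widetilde{T}$ on $\EMs(M)$ is
\[\widetilde{T}\alpha = T[\id_A, \alpha] \circ \delta_A \circ \inr^{TA+MTA}.\]
The key technical step is then to prove that \eqref{eqn:hypothesislifting} holds for this $\widetilde{T}$ if and only if \eqref{eqn:restrictdlMs} holds for $\delta$.

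For the implication \eqref{eqn:restrictdlMs} $\Rightarrow$ \eqref{eqn:hypothesislifting}, the condition lets one factor $\delta_A \circ \inr^{TA+MTA} = T\inr^{\id+M} \circ \lambda_A$, where $\lambda := T[\eta^M, \id_M] \circ \delta \circ \inr$ is a natural transformation $MT \Rightarrow TM$; substituting gives the simpler formula $\widetilde{T}\alpha = T\alpha \circ \lambda_A$. The verification of \eqref{eqn:hypothesislifting} then reduces, via naturality of $\lambda$ in $\alpha$ and the semialgebra axiom \eqref{diag:algmult}, to the identity $T\mu^M_A \circ \lambda_{MA} \circ MT\eta^M_A = \lambda_A$, which itself follows from naturality of $\lambda$ applied to $\eta^M_A$ together with the monad unit $\mu^M \circ M\eta^M = \id$. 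For the converse, I would apply \eqref{eqn:hypothesislifting} to the particular semialgebra $\alpha_X := \inr^{X+MX} \circ \mu^M_X \circ M[\eta^M_X, \id_{MX}]$ on $\Ms X$ that corresponds to the free $\Ms$-algebra $\mus_X$ under the isomorphism, precompose with $MT\inl^{\id+M}_X$, and simplify using the direct identities $\alpha_X \circ M\inl = \inr$ and $\alpha_X \circ \eta^M_{\Ms X} \circ \inl = \inr \circ \eta^M$. Combined with the observation that $\inr^{X+MX} : (MX, \mu^M_X) \to (\Ms X, \alpha_X)$ is a morphism of $M$-semialgebras (so $T\inr \circ \widetilde{T}\mu^M_X = \widetilde{T}\alpha_X \circ MT\inr$) and the formula $\lambda_X = \widetilde{T}\mu^M_X \circ MT\eta^M_X$ recalled from Theorem~\ref{thm:wdlliftings}, this rewrites $\delta_X \circ \inr$ as $T\inr \circ \lambda_X$, whence \eqref{eqn:restrictdlMs} follows since $[\inr \circ \eta^M, \inr] \circ \inr = \inr$.

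The main obstacle will be the bookkeeping required to translate between the two sides of the isomorphism $\EM(\Ms) \cong \EMs(M)$, and in particular identifying the semialgebra $\alpha_X$ corresponding to the free $\Ms$-algebra $\mus_X$ and checking that $\inr^{X+MX}$ is a semialgebra morphism between $\mu^M_X$ and $\alpha_X$. Once these structural observations are in place, both implications collapse into routine manipulations using the coproduct identities \eqref{eqn:coprod-1}--\eqref{eqn:coprod-4}, naturality, and the monad unit axiom.
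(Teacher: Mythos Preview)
Your proposal is correct and follows the same overall architecture as the paper: chain Theorem~\ref{thm:wdlliftings}, Lemma~\ref{lem:liftingsemems}, and Proposition~\ref{prop:isodllifts} (for $\Ms$), then show that condition~\eqref{eqn:restrictdlMs} on $\delta$ matches condition~\eqref{eqn:hypothesislifting} on the induced $\widetilde{T}$. The detailed verifications differ slightly. For the direction \eqref{eqn:restrictdlMs}$\Rightarrow$\eqref{eqn:hypothesislifting}, the paper computes directly with $\deltar$ using its naturality, whereas you first observe that \eqref{eqn:restrictdlMs} forces a factorisation $\deltar = T\inr\circ\lambda$ and then reduce to the familiar form $\widetilde{T}\alpha = T\alpha\circ\lambda_A$; this is a clean shortcut. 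For the other direction, the paper starts from a given weak law $\lambda$, uses the explicit formula $\widetilde{T}x = Tx\circ\lambda_X$, and checks \eqref{eqn:restrictdlMs} by direct computation; you instead instantiate \eqref{eqn:hypothesislifting} at the free semialgebra $\alpha_X=\musb_X$, precompose with $MT\inl$, and exploit that $\inr\colon(MX,\mu^M_X)\to(\Ms X,\alpha_X)$ is a semialgebra morphism. Both routes work; yours is a bit more structural, the paper's a bit more computational, but neither introduces a new idea beyond what is already set up in Sections~\ref{semifree} and~\ref{dllifts}.
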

\begin{proof}
    First, note that a distributive law $\delta:\Ms T \Rightarrow T\Ms$ satisfies an instance of \eqref{diag:dlunitM} saying $\delta \circ \etas T = T\etas$. Thus, we obtain that $\delta = [T\inl^{\id+M}, \deltar]$ for some natural transformation $\deltar: MT \Rightarrow T\Ms$. Then, \eqref{eqn:restrictdlMs} can be simplified to 
    \begin{equation}\label{eqn:restrictdlMssimple}
        \deltar =T[\inr^{\id+M} \circ \eta^M, \inr^{\id+M}] \circ \deltar.
    \end{equation}
    To further lighten notation, we let $\musb := \inr^{\id+M} \circ \mu^M \circ M[\eta^M,\id_M]$ so that $\mus = [\id_{\id+M},\musb]$.

    In the forward direction, we start with a weak distributive law $\lambda:MT \Rightarrow TM$ and let $\widetilde{T}$ be the lifting obtain from Theorem \ref{thm:wdlliftings}. Through the isomorphism $\EMs(M) \cong \EM(\Ms)$, we obtain a lifting $\widetilde{T}^{\mathrm{s}}$ of $T$ to $\EM(\Ms)$, it sends an $\Ms$-algebra $[\id_X,x]$ to $[\id_{TX},\widetilde{T}x]$. Next, by the correspondence in Proposition \ref{prop:isodllifts}, we get a distributive law $\delta: \Ms T \Rightarrow T\Ms$ whose components are 
    \[\delta_X = \Ms TX \xrightarrow{\Ms T\inl^{X+MX}} \Ms T\Ms X \xrightarrow{\widetilde{T}^{\mathrm{s}}\mus_X} T\Ms X.\]
    Since we know the first component is $T\inl^{\id+M}$, we are more interested in the second component which we find to be \begin{align*}
        \deltar_X &= \widetilde{T}^{\mathrm{s}}\mus_X \circ \Ms T\inl^{X+MX} \circ \inr^{TX+MTX}\\
        &= [\id_{T(X+MX)}, \widetilde{T}\musb_X] \circ (T\inl^{X+MX}+MT\inl^{X+MX}) \circ \inr^{TX+MTX}\\
        &= [\id_{T(X+MX)}, \widetilde{T}\musb_X] \circ \inr^{T\Ms X+MT\Ms X} \circ MT\inl^{X+MX}\\
        &= \widetilde{T}\musb_X \circ MT\inl^{X+MX}.
    \end{align*}
    Now, using the fact that $\widetilde{T}\musb_X = T\musb_X \circ \lambda_X$ and the definition of $\musb$, we find that 
    \begin{align*}
        &T[\inr^{X+MX} \circ \eta^M_X, \inr^{X+MX}] \circ \deltar_X\\
        &\hspace{6em}= T[\inr^{X+MX} \circ \eta_X^M, \inr^{X+MX}] \circ \widetilde{T}\musb_X \circ MT\inl^{X+MX}\\
        &\hspace{6em}= T[\inr^{X+MX} \circ \eta_X^M, \inr^{X+MX}] \circ T\musb_X \circ \lambda_X \circ MT\inl^{X+MX}\\
        &\hspace{6em}= T[\inr^{X+MX} \circ \eta_X^M, \inr^{X+MX}] \circ T\inr^{X+MX} \circ T\mu^M_X \circ TM[\eta^M_X,\id_{MX}] \circ \lambda_X \circ MT\inl^{X+MX}\\
        &\hspace{6em}= T\inr^{X+MX} \circ T\mu^M_X \circ TM[\eta^M_X,\id_{MX}] \circ \lambda_X \circ MT\inl^{X+MX}\\
        &\hspace{6em}= \widetilde{T}\musb_X \circ MT\inl^{X+MX}\\
        &\hspace{6em}= \deltar_X.
    \end{align*}

    In the opposite direction, we start with a distributive law $\delta = [T\inl^{\id+M},\deltar]:\Ms T \Rightarrow T\Ms$ which is sent (using Proposition \ref{prop:isodllifts}) to a lifting $\widetilde{T}^{\mathrm{s}}$ of $T$ on $\EM(\Ms)$ that sends an $\Ms$-algebra $[\id_X,x]:\Ms X \rightarrow X$ to \[T[\id_X,x] \circ [T\inl^{X+MX},\deltar_X] = [\id_{TX},T[\id_X,x] \circ \deltar_X].\]
    Using Lemma \ref{lem:liftingsemems}, we obtain a lifting $\widetilde{T}$ on $\EMs(M)$ sending an $M$-semialgebra $x:MX \rightarrow X$ to $T[\id_X,x] \circ \deltar_X$. After showing this lifting satisfies \eqref{eqn:hypothesislifting} below, we can use Theorem \ref{thm:wdlliftings} to obtain the weak distributive law $\lambda$.
    \begin{align*}
        \widetilde{T}x \circ MTx \circ MT\eta^M_X &= T[\id_X,x] \circ \deltar_X \circ MTx \circ MT\eta^M_X\\
        &= T[\id_X,x] \circ T(x+Mx) \circ T(\eta^M_X+M\eta^M_X)\circ \deltar_X &\text{by nat. of $\deltar$}\\
        &= T[x \circ \eta^M_X, x \circ Mx \circ M\eta^M_X]\circ \deltar_X\\
        &= T[x \circ \eta^M_X, x \circ \mu^M_X \circ M\eta^M_X]\circ \deltar_X &\text{by assoc. of $x$}\\
        &= T[x \circ \eta^M_X, x]\circ \deltar_X &\text{by \eqref{diag:multmonad}}\\
        &= T[\id_X,x] \circ T[\inr^{X+MX} \circ \eta^M_X, \inr^{X+MX}]\circ \deltar_X \\
        &= T[\id_X,x] \circ \deltar_X &\text{by \eqref{eqn:restrictdlMssimple}}\\
        &= \widetilde{T}x.
    \end{align*}
\end{proof}
The results of Section \ref{dllifts} can be summarized as follows.
\begin{align*}
    \{\lambda:MT \xRightarrow{\text{w.d.l.}} TM\}
    &\stackrel{\text{Thm \ref{thm:wdlliftings}}}{\longleftrightarrow}
    \{\widetilde{T}:\EMs(M) \xrightarrow{\text{lifts }T}\EMs(M) \text{ satisfying \eqref{eqn:hypothesislifting}}\}\\
    &\stackrel{\text{Lem \ref{lem:liftingsemems}}}{\longleftrightarrow}
    \{\widetilde{T}^{\mathrm{s}}:\EM(\Ms) \xrightarrow{\text{lifts }T}\EM(\Ms)  \text{ such that $S^{-1}\widetilde{T}^{\mathrm{s}}S$ satisfies \eqref{eqn:hypothesislifting}}\}\\
    &\stackrel{\text{Thm \ref{thm:wdlanddl}}}{\longleftrightarrow}
    \{\delta:\Ms T \xRightarrow{\text{d.l.}}T\Ms \text{ satisfying \eqref{eqn:restrictdlMssimple}}\}
\end{align*}

\section{Examples}\label{exmps}
In this section, we give algebraic presentations of three semifree monads.

\subsection{Maybe Monad}
The maybe monad is defined on the functor $-+\mathbf{1}: \mathbf{Set} \rightarrow \mathbf{Set}$ where $\mathbf{1} = \{\star\}$. The unit and multiplication have components given by \[\eta_X = \inl^{X+\mathbf{1}} :X \rightarrow X+\mathbf{1} \qquad \text{ and } \qquad \mu_X = [\id_{X+\mathbf{1}}, \inr^{X+\mathbf{1}}] : X+\mathbf{1}+\mathbf{1} \rightarrow X+\mathbf{1}.\]

By Theorems \ref{thm:semifreemonad} and \ref{thm:isosemialgalg}, we know that the semifree monad for $-+\mathbf{1}$ is a monad on the functor $X \mapsto X+X+\mathbf{1}$ with unit $\etas_X = \inl^{X+(X+\mathbf{1})}$ and multiplication $\mus_X = [\id_X,\inr^{X+(X+\mathbf{1})} \circ [\id_{X+\mathbf{1}},\inr^{X+\mathbf{1}}] \circ ([\inl^{X+\mathbf{1}}, \id_{X+\mathbf{1}}]+\id_{\mathbf{1}})]$. This is very opaque and it does not help us understand the semialgebras for $-+\mathbf{1}$.

An alternative way to see these semialgebras is through the point of view of universal algebra. The theory of pointed sets containing a single constant $\point: 0$ with no equations is an algebraic presentation of the maybe monad. Briefly, this is because a $(-+\mathbf{1})$-algebra $a: X+\mathbf{1} \rightarrow X$ satisfies $\id_X = a \circ \eta_X = a \circ \inl$, thus $a = [\id_X,\point]$ where $\point: \mathbf{1} \rightarrow X$ is the constant. However, in a semialgebra, $a \circ \eta_X$ is only required to satisfy $a \circ \eta_X \circ a = a$ (see proof of Theorem \ref{thm:isosemialgalg}). Therefore, denoting $a = [\a,\point]$, we find (by pre-composing $a \circ \eta_X \circ a = a$ with $\inl$ and $\inr$) that $\a$ is idempotent and $\a(\point) = \point$. We infer that a $(-+\mathbf{1})$-semialgebra can be presented with an idempotent unary operation and a constant preserved by the idempotent.
\begin{theorem}\label{thm:pressemimaybe}
    Let $\Sigma^{\mathrm{s}}_{+\mathbf{1}} = \{\a:1,\point:0\}$ and $E^{\mathrm{s}}_{+\mathbf{1}} = \{\a\a x = \a x,\a\point = \point\}$, then $\Alg(\Sigma^{\mathrm{s}}_{+\mathbf{1}},E^{\mathrm{s}}_{+\mathbf{1}}) \cong \EMs(M)$.
\end{theorem}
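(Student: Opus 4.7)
The plan is to exhibit mutually inverse functors between $\Alg(\Sigma^{\mathrm{s}}_{+\mathbf{1}}, E^{\mathrm{s}}_{+\mathbf{1}})$ and $\EMs(M)$ by unpacking the semialgebra associativity axiom componentwise, formalising the informal argument sketched immediately before the statement. The key observation is that, since $MX = X+\mathbf{1}$ is a coproduct, any morphism $a\colon X+\mathbf{1}\to X$ factors uniquely as $a = [\a,\point]$ with $\a\colon X\to X$ and $\point\colon\mathbf{1}\to X$. This already sets up a bijection at the level of underlying data between candidate semialgebra structures on $X$ and pairs consisting of a unary operation and a constant on $X$.

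Next I would translate the associativity axiom $a\circ Ma = a\circ\mu_X$ into equations on $\a$ and $\point$. Using $\mu_X = [\id_{X+\mathbf{1}},\inr^{X+\mathbf{1}}]$ and $Ma = [\inl^{X+\mathbf{1}}\circ[\a,\point],\inr^{X+\mathbf{1}}]$, a direct expansion via the identities \eqref{eqn:coprod-1}--\eqref{eqn:coprod-4} yields
\[a \circ Ma = [[\a\circ\a,\ \a\circ\point],\ \point] \qquad \text{and} \qquad a\circ\mu_X = [[\a,\point],\ \point],\]
so associativity is equivalent to the conjunction $\a\circ\a = \a$ and $\a\circ\point = \point$, which is exactly $E^{\mathrm{s}}_{+\mathbf{1}}$. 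Hence $(X,a)$ is an $M$-semialgebra if and only if $(X,\a,\point)$ is a $(\Sigma^{\mathrm{s}}_{+\mathbf{1}},E^{\mathrm{s}}_{+\mathbf{1}})$-algebra.

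To promote this object-level bijection to a functorial isomorphism, I would fix $f\colon X\to Y$ and expand the semialgebra homomorphism condition $f\circ[\a^X,\point^X] = [\a^Y,\point^Y]\circ(f+\id_{\mathbf{1}})$ through the same coproduct identities; precomposing with $\inl^{X+\mathbf{1}}$ and $\inr^{X+\mathbf{1}}$ reduces it to the pair $f\circ\a^X = \a^Y\circ f$ and $f\circ\point^X = \point^Y$, which is precisely the condition to be a $(\Sigma^{\mathrm{s}}_{+\mathbf{1}},E^{\mathrm{s}}_{+\mathbf{1}})$-homomorphism. Since both assignments act as the identity on underlying carrier maps, they preserve identities and composition on the nose, giving the desired isomorphism of categories. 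There is no real obstacle here; the only mildly subtle point worth recording is that the two axioms together are equivalent to the derived condition $a\circ\eta_X\circ a = a$ highlighted in Remark~\ref{rem:idempot}, which could serve as a shortcut in the verification.
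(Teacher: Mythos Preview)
Your proposal is correct and follows essentially the same route as the paper: write $a=[\a,\point]$, translate the associativity axiom into the two equations of $E^{\mathrm{s}}_{+\mathbf{1}}$, and check that homomorphisms match componentwise. The only organisational difference is that the paper first reduces associativity to the single condition $a\circ\inl\circ a = a$ (equivalently $a\circ\eta_X\circ a=a$, the condition you flag at the end) and then extracts the two equations by precomposing with $\inl$ and $\inr$, whereas you expand $a\circ Ma$ and $a\circ\mu_X$ directly; both arrive at the same place.
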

\begin{proof}
    A semialgebra for the maybe monad is a function $a:X+\mathbf{1} \rightarrow X$ satisfying $a \circ (a+\id_{\mathbf{1}}) = a \circ [\id_{X+\mathbf{1}},\inr^{X+\mathbf{1}}]$. Pre-composing by $\inl^{X+\mathbf{1}}$ and $\inr^{X+\mathbf{1}}$, we find that, equivalently, $a$ satisfies $a \circ \inl \circ a = a$.

    Now, given a $(-+\mathbf{1})$-semialgebra $a:X+\mathbf{1} \rightarrow X$, we define $\a := a \circ \inl: X \rightarrow X$ and $\point := a \circ \inr: \mathbf{1} \rightarrow X$. These operations satisfy the equations in $E^{\mathrm{s}}_{+\mathbf{1}}$ by the following derivations.
    \begin{gather*}
        \a \circ \a = a \circ \inl \circ a \circ \inl = a \circ \inl = \a \\
        \a \circ \point = a \circ \inl \circ a \circ \inr = a \circ \inr = \point
    \end{gather*}
    Conversely, given $\a:X \rightarrow X$ and $\point: \mathbf{1} \rightarrow X$ satisfying $\a \circ \a = \a$ and $\a \circ \point = \point$, we define $a:= [\a,\point]: X+\mathbf{1} \rightarrow X$. To verify $[\a,\point]$ is a $(-+\mathbf{1})$-semialgebra, it is enough to check that $[\a,\point] \circ \inl \circ [\a,\point] = [\a,\point]$. This follows like so:
    \[[\a,\point] \circ \inl \circ [\a,\point] = \a \circ [\a, \point] = [\a \circ \a, \a \circ \point] = [\a, \point].\]

    These operations are clearly inverses, and we are left to show that they are functorial. Suppose $f:X \rightarrow Y$ is a homomorphism from $a$ to $b$ (i.e. $f \circ a = b \circ (f+\id_{\mathbf{1}})$), then 
    \begin{gather*}
        f \circ \a = f \circ a \circ \inl = b \circ (f+\id_{\mathbf{1}}) \circ \inl = b \circ \inl \circ f = \b \circ f\\
        f \circ \point^a = f \circ a \circ \inr = b \circ (f+\id_{\mathbf{1}}) \circ \inr = b \circ \inr = \point^b.
    \end{gather*}
    Conversely, suppose $f \circ \a = \b \circ f$ and $f \circ \point^a = \point^b$, then
    \[f \circ [\a,\point^a] = [f \circ \a, f \circ \point^a] = [\b \circ f, \point^b] = [\b,\point^b] \circ (f+\id_{\mathbf{1}}).\] 
\end{proof}

\subsection{Semigroup Monad}
The semigroup (or non-empty lists) monad $(-)^+ : \mathbf{Set} \rightarrow \mathbf{Set}$ sends $X$ to $X^+$ the set of non-empty finite words over $X$ (we denote them with lists e.g.: $[x_1,x_2,x_3]$). The unit and multiplication are given by $\eta_X: X \rightarrow X^+ = x \mapsto [x]$ and \[\mu_X: (X^+)^+ \rightarrow X^+ = [[x_{1,1},\dots, x_{n_1,1}], \dots, [x_{k,1},\dots, x_{n_k,k}]] \mapsto [x_{1,1},\dots, x_{n_1,1}, x_{2,1} \dots, x_{n_{k-1},k-1},x_{k,1},\dots, x_{n_k,k}].\]
This monad is presented by the theory of semigroups which contains a binary operation with an associativity equation. We will not bother working out what the semifree monad for $(-)^+$ is, and we give its algebraic presentation at once.
\begin{theorem}\label{thm:pressemisemigroup}
    Let $\Sigma^{\mathrm{s}}_+ = \{\a : 1, \cdot : 2\}$ and $E^{\mathrm{s}}_+$ contain 
    \begin{align*}
        \a\a x &= \a x\\
        \a(x\cdot y) &= x \cdot y\\
        \a x \cdot \a y &= x\cdot y\\
        (x\cdot y) \cdot z &= x \cdot (y\cdot z),
    \end{align*}
    then $\Alg(\Sigma^{\mathrm{s}}_+, E^{\mathrm{s}}_+) \cong \EMs((-)^+)$.
\end{theorem}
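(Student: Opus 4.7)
The plan is to mirror closely the structure of the proof of Theorem~\ref{thm:pressemimaybe}, exploiting the characterization of $(-)^+$-semialgebras from Theorem~\ref{thm:isosemialgalg} and Remark~\ref{rem:idempot}. Given a semialgebra $a\colon X^+\to X$, define $\a x := a([x])$ and $x\cdot y := a([x,y])$. The four equations can be verified using the associativity $a\circ\mu_X = a\circ a^+$ and the identity $a\circ\eta_X\circ a = a$ derived in the proof of Theorem~\ref{thm:isosemialgalg}. Specifically, $\a\a x = \a x$ is Remark~\ref{rem:idempot}; $\a(x\cdot y) = x\cdot y$ follows by instantiating $a\circ\eta_X\circ a = a$ at $[x,y]$; $\a x \cdot \a y = x\cdot y$ follows from $a\circ\mu_X = a\circ a^+$ applied to $[[x],[y]]\in X^{++}$; and associativity of $\cdot$ follows by applying $a\circ\mu_X = a\circ a^+$ to $[[x,y],[z]]$ and to $[[x],[y,z]]$, yielding $(x\cdot y)\cdot\a z = \a x\cdot (y\cdot z)$, and then removing the $\a$s on both sides using the equations (see the key lemma below).

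Conversely, given a $(\Sigma^{\mathrm{s}}_+,E^{\mathrm{s}}_+)$-algebra, define $a\colon X^+\to X$ by $a([x]):=\a x$ and, for $n\geq 2$, $a([x_1,\ldots,x_n]):=x_1\cdot x_2\cdot \cdots\cdot x_n$, unambiguously bracketed thanks to associativity. The key lemma is that in any product one may freely insert or delete $\a$ at any position: from $\a x \cdot \a y = x\cdot y$ and $\a\a x = \a x$ one gets $\a x \cdot y = \a x \cdot \a y = x\cdot y$ and symmetrically on the right. To verify $a\circ\mu_X = a\circ a^+$, fix $w = [[x_{1,1},\ldots,x_{1,n_1}],\ldots,[x_{k,1},\ldots,x_{k,n_k}]]\in X^{++}$ and split on whether $k=1$ (use $\a\a x = \a x$ or $\a(x\cdot y) = x\cdot y$) or $k\geq 2$; in the latter case, some $a(w_i)$ may be of the form $\a x_{i,1}$ when $n_i=1$, but the key lemma absorbs those $\a$s into the large product $x_{1,1}\cdots x_{k,n_k}$.

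For the fact that the two assignments are mutually inverse, one direction is immediate (reading off $\a$ and $\cdot$ from the reconstructed $a$ returns the original operations). The other is an induction on list length: for $n\geq 3$, apply $a\circ\mu_X = a\circ a^+$ to $[[x_1],[x_2,\ldots,x_n]]$ to get $a([x_1,\ldots,x_n]) = a([\a x_1, a([x_2,\ldots,x_n])])$, and finish by the inductive hypothesis together with the key lemma. Functoriality is checked as in Theorem~\ref{thm:pressemimaybe}: a $(-)^+$-semialgebra morphism $f\circ a = b\circ f^+$ restricts to preservation of $\a$ and $\cdot$ by pre-composing with $\eta_X$ and with $\eta_{X^+}$ composed with the length-$2$ inclusion, while the converse is proved by induction on list length using that $f$ is a semigroup homomorphism for $\cdot$ and commutes with $\a$.

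The main obstacle is organizing the bookkeeping in the verification of $a\circ\mu_X = a\circ a^+$ and of the inverse property, since sublists of length $1$ introduce applications of $\a$ that must be reconciled with the flat product in $\mu_X(w)$. Once the key lemma on free insertion and deletion of $\a$ inside products is isolated, every such reconciliation becomes a one-step application, and the remaining calculations are routine.
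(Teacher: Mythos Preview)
Your proposal is correct and follows essentially the same approach as the paper's proof. The only cosmetic difference is that the paper defines the reconstructed semialgebra uniformly as $a([x_1,\ldots,x_n]) := \a x_1 \cdots \a x_n$ for all $n\geq 1$ (avoiding your case split on $n=1$ versus $n\geq 2$ and on $k=1$ versus $k\geq 2$ in the associativity check), and states the key lemma as $\a x_1 \cdots \a x_n = x_1\cdots x_n = \a(x_1\cdots x_n)$ for $n\geq 2$; the content is identical.
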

\begin{proof}
    The structure of the proof is exactly the same as for Theorem \ref{thm:pressemimaybe} (all details are in the appendix). A $(-)^+$-semialgebra $a:X^+ \rightarrow X$ is sent to $(X,\a:0,\cdot:2)$ where  $\a:= a[-]$ and $\cdot := a[-,-]$. A $((\Sigma^{\mathrm{s}}_+, E^{\mathrm{s}}_+))$-algebra $(X,\a,\cdot)$ is sent to $a: X^+ \rightarrow X = [x_1,\dots, x_n] \mapsto \a x_1 \cdots \a x_n$ which is well-defined by associativity of $\cdot$. There is a technical difficulty which requires us to prove (by induction) that the equations in $E^{\mathrm{s}}_+$ imply
    \[\forall n\geq 2, \qquad \a x_1 \cdots \a x_n = x_1 \cdots x_n = \a(x_1\cdots x_n).\]
\end{proof}
\subsection{Distribution Monad}
The distribution monad $\mD: \mathbf{Set} \rightarrow \mathbf{Set}$ sends $X$ to $\mD X$ the set of finitely supported distributions on $X$, i.e., 
\[\mD(X) := \{\varphi \in [0,1]^X \mid \sum_{x \in X} \varphi(x) = 1 \text{ and } \varphi(x) \neq 0 \text{ for finitely many $x$s}\}.\]
Its unit and multiplication are given by $\eta_X = x \mapsto 1x$, where $1x$ is the Dirac distribution at $x$, and
\[\mu_X = \Phi \mapsto \left( x \mapsto \sum_{\phi \in \mathrm{supp}(\Phi)} \Phi(\phi) \cdot \phi(x) \right).\]
It is presented by the theory of convex algebras which contains a binary operation $+_p$ for every $p\in (0,1)$ satisfying idempotence ($x+_p x= x$), skew commutativity ($x+_p y = y+_{1-p} x$) and skew associativity ($(x+_q y) +_p z = x+_{pq} (y +_{\frac{p (1-q)}{1-pq}} z)$). Here is the algebraic presentation of $\mD^{\mathrm{s}}$.
\begin{theorem}\label{thm:pressemibary}
    Let $\Sigma^{\mathrm{s}}_{\mD} = \{\a: 1, +_p :2 \mid p\in (0,1)\}$ and $E^{\mathrm{s}}_{\mD}$ contain 
    \begin{align*}
        \a \a x &= \a x\\
        \a(x+_p y) &= x+_p y\\
        \a x+_p \a y&= x+_p y\\
        x+_p x&= \a x\\
        x+_p y &= y+_{1-p} x\\
        (x+_q y) +_p z &= x+_{pq} (y +_{\frac{p (1-q)}{1-pq}} z),
    \end{align*}then $\Alg(\Sigma^{\mathrm{s}}_{\mD},E^{\mathrm{s}}_{\mD}) \cong \EMs(\mD)$
\end{theorem}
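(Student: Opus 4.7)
The plan is to follow the same structure as the proofs of Theorems~\ref{thm:pressemimaybe} and~\ref{thm:pressemisemigroup}, with the binary convex operations replacing the single binary multiplication of the semigroup case. Given a $\mD$-semialgebra $a \colon \mD X \to X$, I would define the unary operation $\a := a \circ \eta_X$ and, for each $p \in (0,1)$, the binary operation $x +_p y := a(p \cdot 1x + (1-p) \cdot 1y)$. The idempotence axiom $\a\a x = \a x$ is then immediate from Remark~\ref{rem:idempot}, and the axiom $x +_p x = \a x$ follows from the equality $p \cdot 1x + (1-p) \cdot 1x = 1x$ in $\mD X$. The two bridging axioms $\a(x +_p y) = x +_p y$ and $\a x +_p \a y = x +_p y$ are obtained by applying the semialgebra equation $a \circ \mu_X = a \circ \mD a$ to well-chosen elements of $\mD\mD X$: for the first, the Dirac distribution $1(p \cdot 1x + (1-p) \cdot 1y)$, and for the second, the distribution $p \cdot 1(1x) + (1-p) \cdot 1(1y)$. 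Skew commutativity and skew associativity of $+_p$ translate into identities among finitely supported distributions that hold on the nose in $\mD X$ and are hence preserved by $a$.

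For the converse direction, suppose $(X, \a, \{+_p\}_{p \in (0,1)})$ satisfies all equations in $E^{\mathrm{s}}_{\mD}$. I would define $a \colon \mD X \to X$ by induction on the size $n$ of the support: $a(1x) := \a x$, and for $\varphi$ with support $\{x_1, \ldots, x_n\}$ and weights $p_1, \ldots, p_n$ with $n \geq 2$, $a(\varphi) := x_1 +_{p_1} a(\varphi')$, where $\varphi' = \sum_{i=2}^{n} \tfrac{p_i}{1-p_1} \cdot 1 x_i$. Well-definedness (independence from the chosen enumeration of the support) is a standard consequence of skew commutativity and skew associativity, exactly as in the usual presentation of convex algebras. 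A useful intermediate observation, analogous to the identity $\a x_1 \cdots \a x_n = x_1 \cdots x_n = \a(x_1 \cdots x_n)$ singled out in the proof of Theorem~\ref{thm:pressemisemigroup}, is that for any $n \geq 2$ the value $a(\varphi)$ is fixed by $\a$, and that one may freely replace any $x_i$ in the iterated combination by $\a x_i$ (by induction using the axiom $\a x +_p \a y = x +_p y$). Verifying the semialgebra equation $a \circ \mu_X = a \circ \mD a$ then amounts to an iterated convex combination identity that is proved by induction on $|\mathrm{supp}(\Phi)|$ for $\Phi \in \mD \mD X$, using skew associativity together with the two bridging axioms to freely move $\a$ across the nested combinations.

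The two assignments are inverse to each other, with the checks amounting to evaluating the defining formulas at Dirac distributions $1x$ and at two-point distributions $p \cdot 1x + (1-p) \cdot 1y$, playing the role of $\inl$ and $\inr$ in the previous two examples. Functoriality in both directions is a routine computation entirely parallel to the semigroup case. The main obstacle is concentrated in the backward direction: setting up the recursion for $a$ on distributions of arbitrary finite support and verifying the semialgebra axiom. This requires organising the induction carefully and exploiting the two new axioms $\a(x+_p y) = x+_p y$ and $\a x +_p \a y = x +_p y$ to confirm that the idempotent $\a$ may be absorbed into or extracted from any iterated binary convex combination without changing its value. As in Theorem~\ref{thm:pressemisemigroup}, these bookkeeping details are naturally relegated to an appendix.
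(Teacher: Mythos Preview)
Your proposal is correct and follows the same approach as the paper. The only notable difference is that in the converse direction the paper defines $a\bigl(\sum_{i=1}^n p_i x_i\bigr) := \bigplus_{i=1}^n p_i\,\a x_i$, applying $\a$ to every argument up front; this is extensionally the same as your recursion (via the bridging axioms you identify), but it makes the modified idempotence $\a x +_p \a x = \a\a x = \a x$ immediately available when verifying $a \circ \mD a = a \circ \mu_X$, where repeated leaves can occur after flattening.
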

\begin{proof}
    In this proof, we will have to distinguish probability distributions seen in $\mD(X)$ and those seen in a $(\Sigma^{\mathrm{s}}_{\mD},E^{\mathrm{s}}_{\mD})$-algebra. In the former, we will write $px+\overline{p}y$ ($\overline{p}:=1-p$) in the binary case and $\sum_{i=1}^n p_i x_i$ in general. In the latter, we will write $x+_p y$ in the binary case and $\bigplus_{i=1}^n p_ix_i$ in general. Note that $\bigplus$ is well-defined by skew associativity of $+_p$ (assuming all $x_i$s are distinct). 

    The proof follows the sketch of the last two (all details are in the appendix). A $\mD$-semialgebra $a:\mD X \rightarrow X$ is sent to $(X,\a: 0, +_p: 2)$ where $\a := a(1-)$ and $+_p:= a(p-+\overline{p}-)$. A $(\Sigma^{\mathrm{s}}_{\mD},E^{\mathrm{s}}_{\mD})$-algebra $(X,\a:0, +_p: 2)$ is sent to $a: \mD X \rightarrow X = \sum_{i=1}^n p_i x_i \mapsto \bigplus_{i=1}^n p_i\a x_i$. Analogously to the proof for the semigroup monad, a step in the proof uses the fact that the equations in $E^{\mathrm{s}}_{\mD}$ imply
    \[\forall n\geq 2, \qquad \a\left( \bigplus_{i=1}^n p_ix_i \right) = \bigplus_{i=1}^n p_ix_i = \bigplus_{i=1}^n p_i \a x_i.\]
\end{proof}
These three examples show a clear link between the presentation of a monad on $\mathbf{Set}$ and of its semifree monad. They all fit in the following conjecture.
\begin{conjecture}\label{conj:pressemialg}
    Let $(M,\eta,\mu)$ be a monad on $\mathbf{Set}$ with an algebraic presentation $(\Sigma_M,E_M)$, then then the semifree monad $(\Ms,\etas,\mus)$ is presented by the signature $\Sigma_M \cup \{\a :1\}$ with the equations
    \begin{gather*}
        \a \a x = \a x\\
        \forall \mathsf{op}:n \in \Sigma_M, \a(\mathsf{op}(x_1,\dots, x_n)) = \mathsf{op}(x_1,\dots,x_n) = \mathsf{op}(\a x_1, \dots, \a x_n)\\
        \forall t(x_1,\dots, x_n) = s(x_1,\dots, x_n) \in E_M, t(\a x_1, \dots, \a x_n) = s(\a x_1, \dots, \a x_n).
    \end{gather*}
\end{conjecture}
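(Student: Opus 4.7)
The plan is to prove the conjectured isomorphism in the form $\Alg(\Sigma_M \cup \{\a\}, E) \cong \EMs(M)$, which, combined with Theorem~\ref{thm:isosemialgalg}, yields the desired presentation of $\Ms$. The construction in both directions hinges on Remark~\ref{rem:idempot} (idempotence of $\a := a\circ\eta_X$) together with the two consequences of the semialgebra axiom extracted inside the proof of Theorem~\ref{thm:isosemialgalg}: $a\circ\eta_X\circ a = a$ (so $a$ fixes the idempotent) and $a \circ M\a = a$ (applying $\a$ pointwise before $a$ is invisible to $a$).

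For the forward direction, given $a\colon MX\to X$ I would take $\a^X := a\circ\eta_X$ and, using the presentation $(\Sigma_M,E_M)$ of $M$, interpret each $\mathsf{op}\colon n$ via its canonical element $t_{\mathsf{op}}\in M(n)$ by setting $\mathsf{op}^X(x_1,\dots,x_n) := a(M\hat{x}(t_{\mathsf{op}}))$, where $\hat{x}\colon n\to X$ sends $i\mapsto x_i$. The new equations then reduce to bare identities: idempotence of $\a$ is Remark~\ref{rem:idempot}; $\a(\mathsf{op}(\vec{x})) = \mathsf{op}(\vec{x})$ follows from $a\circ\eta_X\circ a = a$; $\mathsf{op}(\vec{x}) = \mathsf{op}(\a\vec{x})$ follows from $a\circ M\a = a$ after rewriting $M\widehat{\a x}(t_{\mathsf{op}}) = M\a\circ M\hat{x}(t_{\mathsf{op}})$; and the equations inherited from $E_M$ fall out of the key lemma discussed below.

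For the converse, given $(X,\{\mathsf{op}^X\},\a^X)$ satisfying the equations, set $Y := \{x\in X \mid \a^X x = x\}$ with inclusion $\iota\colon Y\hookrightarrow X$ and projection $\pi\colon X\to Y$ obtained by corestricting $\a^X$, so that $\a^X = \iota\pi$ and $\pi\iota = \id_Y$. The equations $\a(\mathsf{op}(\vec{x})) = \mathsf{op}(\vec{x})$ make $Y$ closed under each $\mathsf{op}^X$ (yielding restrictions $\mathsf{op}^Y$), and the equations $t(\a\vec{x}) = s(\a\vec{x})$ force $(Y,\{\mathsf{op}^Y\})$ to satisfy $E_M$, making it a genuine $(\Sigma_M,E_M)$-algebra with $M$-algebra structure $b\colon MY\to Y$. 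Then $a := \iota\circ b\circ M\pi$ is the candidate semialgebra; its associativity reduces to that of $b$ via $M\pi\circ M\iota = \id_{MY}$ and naturality of $\mu$. Mutual inversion in one direction requires showing the original $a$ factors as $\iota\circ b\circ M\pi$, which I would derive from the auxiliary identity $a\circ M\iota = \iota\circ b$ (together with $a = a\circ M\a = a\circ M\iota\circ M\pi$); the opposite direction is direct unwinding. Functoriality is then immediate: a semialgebra morphism commutes with $\a^X$ and each $\mathsf{op}^X$ by naturality, and conversely restricts to an $M$-algebra morphism $\bar f\colon Y \to Y'$ that re-extends via $\iota$, $b$, and $M\pi$.

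The main obstacle is a technical lemma bridging syntactic evaluation and the monad: \emph{for any $\Sigma_M$-term $t(x_1,\dots,x_n)$ with class $\llbracket t\rrbracket \in M(n)$, and any assignment $\hat{y}\colon n\to X$ with image in $Y$, the syntactic evaluation of $t$ at $\vec{y}$ using the operations $\mathsf{op}^X$ equals $a(M\hat{y}(\llbracket t\rrbracket))$.} This is proved by induction on $t$: the variable case uses $a\circ\eta_X|_Y = \a|_Y = \id_Y$, which holds on $Y$ but fails outside it --- precisely why the conjectured equations are stated with $\a\vec{x}$ rather than $\vec{x}$ --- and the inductive step combines the definition of $\mathsf{op}^X$ with the monad identity expressing $\llbracket\mathsf{op}(t_1,\dots,t_k)\rrbracket$ as the substitution $\mu_n\circ M\widehat{\llbracket t\rrbracket}(t_{\mathsf{op}})$, after which associativity of $a$ collapses the two layers. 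With this lemma, the $E_M$-equations $t(\a\vec{x}) = s(\a\vec{x})$ are immediate from $\llbracket t\rrbracket = \llbracket s\rrbracket$ in $M(n)$, and the reverse passage recovering $b$ from the operations $\mathsf{op}^Y$ is just the standard equivalence between $(\Sigma_M,E_M)$-algebras and $M$-algebras applied to $Y$. Making this syntax-to-monad bridge rigorous for terms of arbitrary depth is the technical heart of the argument; the surrounding bookkeeping is routine.
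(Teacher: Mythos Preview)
The paper does not prove this statement: it is stated as Conjecture~\ref{conj:pressemialg} and explicitly listed under ``Directions for Future Work'' as an open problem. So there is no paper proof to compare against; you are going beyond what the paper establishes.

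Your argument appears to be a correct proof of the conjecture. The key insight --- splitting the idempotent $\a$ to obtain the retract $Y = \{x \mid \a x = x\}$, observing that $Y$ is closed under the $\Sigma_M$-operations (by $\a(\mathsf{op}(\vec{x})) = \mathsf{op}(\vec{x})$) and satisfies $E_M$ (by the $\a$-substituted equations evaluated at fixed points), hence carries a genuine $M$-algebra structure $b$, and then reconstructing the semialgebra as $a = \iota \circ b \circ M\pi$ --- is exactly the right structural move. Your bridge lemma (syntactic evaluation at $Y$-valued assignments equals $a \circ M\hat{y}$ applied to the term's class in $M(n)$) is the correct technical core, and your induction sketch is sound: the base case needs precisely $\a|_Y = \id_Y$, and the inductive step unwinds via the associativity of $a$ and the substitution formula $\llbracket \mathsf{op}(\vec{t})\rrbracket = \mu_n(M\widehat{\llbracket t\rrbracket}(t_{\mathsf{op}}))$. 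The round-trip checks and functoriality are routine as you say.

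Compared to the paper's three worked examples (Theorems~\ref{thm:pressemimaybe}, \ref{thm:pressemisemigroup}, \ref{thm:pressemibary}), which proceed by ad hoc induction on the syntactic shape of each specific theory, your idempotent-splitting argument is uniform and conceptually cleaner: it explains \emph{why} the pattern works rather than verifying it case by case. One small point worth making explicit when you write it up: in the mutual-inverse check starting from a semialgebra $a$, the identity $a \circ M\iota = \iota \circ b$ relies on the fact that the corestriction of $a \circ M\iota$ to $Y$ is a genuine $M$-algebra (the unit axiom holds because $\a|_Y = \id_Y$), and two $M$-algebra structures on $Y$ inducing the same $\Sigma_M$-operations must coincide by the assumed presentation. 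You gesture at this but it deserves a sentence.
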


\section{Directions for Future Work}\label{conclusion}
In this paper we proved that semialgebras for a monad $M$ on a
category with coproducts are in fact algebras for the semifree monad
$M^s$ with underlying functor $\id+M$. We also showed that weak
distributive laws $MT\Rightarrow TM$ correspond to certain strong
distributive laws $\Ms T\Rightarrow T\Ms$. 

Starting with a weak distributive law $\lambda: MT \Rightarrow TM$, we
now have two ways to obtain a weak composite of $T$ and $M$. If
idempotents split in the base category, Garner's method \cite{Garner}
yields a weak lifting $\widehat{T}$ of $T$ to $\EM(M)$ and hence a
monad structure on the functor $U^M \circ \widehat{T} \circ F^M$,
where $F^M$ is the free $M$-algebra functor sending $X$ to
$(MX,\mu^M_X)$. If all coproducts exist in the base category, our
method also yields a lifting $\widetilde{T}^{\mathrm{s}}$ of $T$ on
$\EM(\Ms)$ and hence a monad structure on the functor $U^{\Ms} \circ
\widetilde{T}^{\mathrm{s}} \circ F^{\Ms}$.

In general these composite monads are not the same, for example, for
the weak distributive law of~\cite{GoyPetrisan20} we obtain on one
hand the monad $\mathcal{P}_c\mathcal{D}$ of convex powersets of
distributions, and on the other, a monad
$\mathcal{P}(\id+\mathcal{D})$. Understanding how these composite
monads relate directly and how the second one can be used for the
semantics of computational effects is left for future investigations.


The adjunctions between $\EMs(M)$ and $\EM(M)$ described
in~\cite[Lemma 12]{Garner} can now be seen as adjunctions between
$\EM(M)$ and $\EM(\Ms)$ as drawn below.
\begin{equation*}
        \begin{tikzcd}
            {\EM(M)} && {\EM(\Ms)} \\
            \\
            & {\mathbf{C}}
            \arrow[""{name=0, anchor=center, inner sep=0}, "{U^M}", shift left=2, from=1-1, to=3-2]
            \arrow[""{name=1, anchor=center, inner sep=0}, "{F^{\Ms}}", shift left=2, from=3-2, to=1-3]
            \arrow[""{name=2, anchor=center, inner sep=0}, "I", shift left=2, from=1-1, to=1-3]
            \arrow[""{name=3, anchor=center, inner sep=0}, "K", shift left=2, from=1-3, to=1-1]
            \arrow[""{name=4, anchor=center, inner sep=0}, "{U^{\Ms}}", shift left=2, from=1-3, to=3-2]
            \arrow[""{name=5, anchor=center, inner sep=0}, "{F^M}", shift left=2, from=3-2, to=1-1]
            \arrow["\dashv"{anchor=center, rotate=-45}, draw=none, from=1, to=4]
            \arrow["\dashv"{anchor=center, rotate=45}, draw=none, from=5, to=0]
            \arrow["\dashv"{anchor=center, rotate=90}, shift left=2, draw=none, from=3, to=2]
            \arrow["\dashv"{anchor=center, rotate=-90}, shift left=2, draw=none, from=2, to=3]
        \end{tikzcd}
\end{equation*}
One can show that $I$ is the functor induced by the monad map
$[\eta^M,\id_M]: \Ms \Rightarrow M$, and hence that $I \circ U^{M} =
U^{\Ms}$. Also, while this only implies $K \circ F^{\Ms}$ and $F^M$
are isomorphic, since both are left adjoints to $U^M$, one can also
prove they are equal. However, the other triangles do not necessarily
commute.


Another direction for future research is Conjecture
\ref{conj:pressemialg}. If it is resolved positively or if another
general algebraic presentation for semialgebras is discovered, we may
be able to use Theorem \ref{thm:wdlanddl} in conjunction with the very
general results of \cite{Zwart19} to obtain no-go theorems for
\textit{weak} distributive laws.
\bibliographystyle{eptcs}
\bibliography{refs}

\begin{thebibliography}{10}
\providecommand{\bibitemdeclare}[2]{}
\providecommand{\surnamestart}{}
\providecommand{\surnameend}{}
\providecommand{\urlprefix}{Available at }
\providecommand{\url}[1]{\texttt{#1}}
\providecommand{\href}[2]{\texttt{#2}}
\providecommand{\urlalt}[2]{\href{#1}{#2}}
\providecommand{\doi}[1]{doi:\urlalt{http://dx.doi.org/#1}{#1}}
\providecommand{\eprint}[1]{arXiv:\urlalt{https://arxiv.org/abs/#1}{#1}}
\providecommand{\bibinfo}[2]{#2}

\bibitemdeclare{book}{Awodey}
\bibitem{Awodey}
\bibinfo{author}{Steve \surnamestart Awodey\surnameend} (\bibinfo{year}{2010}):
  \emph{\bibinfo{title}{Category Theory}}.
\newblock \bibinfo{publisher}{Oxord University Press}.

\bibitemdeclare{misc}{Bauer2019}
\bibitem{Bauer2019}
\bibinfo{author}{Andrej \surnamestart Bauer\surnameend} (\bibinfo{year}{2019}):
  \emph{\bibinfo{title}{What is algebraic about algebraic effects and
  handlers?}}
\newblock \eprint{1807.05923}.

\bibitemdeclare{inproceedings}{Beck1969}
\bibitem{Beck1969}
\bibinfo{author}{Jon \surnamestart Beck\surnameend} (\bibinfo{year}{1969}):
  \emph{\bibinfo{title}{Distributive laws}}.
\newblock In \bibinfo{editor}{B.~\surnamestart Eckmann\surnameend}, editor:
  {\sl \bibinfo{booktitle}{Seminar on Triples and Categorical Homology
  Theory}}, \bibinfo{publisher}{Springer Berlin Heidelberg},
  \bibinfo{address}{Berlin, Heidelberg}, pp. \bibinfo{pages}{119--140},
  \doi{10.1007/BFb0083084}.

\bibitemdeclare{article}{Bohm}
\bibitem{Bohm}
\bibinfo{author}{Gabriella \surnamestart Böhm\surnameend},
  \bibinfo{author}{Stephen \surnamestart Lack\surnameend} \&
  \bibinfo{author}{Ross \surnamestart Street\surnameend}
  (\bibinfo{year}{2011}): \emph{\bibinfo{title}{On the 2-Categories of Weak
  Distributive Laws}}.
\newblock {\sl \bibinfo{journal}{Communications in Algebra}}
  \bibinfo{volume}{39}(\bibinfo{number}{12}), pp. \bibinfo{pages}{4567--4583},
  \doi{10.1080/00927872.2011.616436}.

\bibitemdeclare{inproceedings}{Bonchi19}
\bibitem{Bonchi19}
\bibinfo{author}{F.~\surnamestart {Bonchi}\surnameend},
  \bibinfo{author}{A.~\surnamestart {Sokolova}\surnameend} \&
  \bibinfo{author}{V.~\surnamestart {Vignudelli}\surnameend}
  (\bibinfo{year}{2019}): \emph{\bibinfo{title}{The Theory of Traces for
  Systems with Nondeterminism and Probability}}.
\newblock In: {\sl \bibinfo{booktitle}{2019 34th Annual ACM/IEEE Symposium on
  Logic in Computer Science (LICS)}}, pp. \bibinfo{pages}{1--14},
  \doi{10.1109/LICS.2019.8785673}.

\bibitemdeclare{inproceedings}{Bonchi2021}
\bibitem{Bonchi2021}
\bibinfo{author}{Filippo \surnamestart Bonchi\surnameend} \&
  \bibinfo{author}{Alessio \surnamestart Santamaria\surnameend}
  (\bibinfo{year}{2021}): \emph{\bibinfo{title}{Combining Semilattices and
  Semimodules}}.
\newblock In \bibinfo{editor}{Stefan \surnamestart Kiefer\surnameend} \&
  \bibinfo{editor}{Christine \surnamestart Tasson\surnameend}, editors: {\sl
  \bibinfo{booktitle}{Foundations of Software Science and Computation
  Structures}}, \bibinfo{publisher}{Springer International Publishing},
  \bibinfo{address}{Cham}, pp. \bibinfo{pages}{102--123},
  \doi{10.1109/LICS.2019.8785707}.
\newblock \urlprefix\url{https://arxiv.org/abs/2012.14778}.

\bibitemdeclare{inproceedings}{Bonchi17b}
\bibitem{Bonchi17b}
\bibinfo{author}{Filippo \surnamestart Bonchi\surnameend},
  \bibinfo{author}{Alexandra \surnamestart Silva\surnameend} \&
  \bibinfo{author}{Ana \surnamestart Sokolova\surnameend}
  (\bibinfo{year}{2017}): \emph{\bibinfo{title}{{The Power of Convex
  Algebras}}}.
\newblock In \bibinfo{editor}{Roland \surnamestart Meyer\surnameend} \&
  \bibinfo{editor}{Uwe \surnamestart Nestmann\surnameend}, editors: {\sl
  \bibinfo{booktitle}{28th International Conference on Concurrency Theory
  (CONCUR 2017)}}, {\sl \bibinfo{series}{Leibniz International Proceedings in
  Informatics (LIPIcs)}}~\bibinfo{volume}{85}, \bibinfo{publisher}{Schloss
  Dagstuhl--Leibniz-Zentrum fuer Informatik}, \bibinfo{address}{Dagstuhl,
  Germany}, pp. \bibinfo{pages}{23:1--23:18},
  \doi{10.4230/LIPIcs.CONCUR.2017.23}.
\newblock \urlprefix\url{http://drops.dagstuhl.de/opus/volltexte/2017/7796}.

\bibitemdeclare{article}{Garner}
\bibitem{Garner}
\bibinfo{author}{Richard \surnamestart Garner\surnameend}
  (\bibinfo{year}{2020}): \emph{\bibinfo{title}{The {V}ietoris monad and weak
  distributive laws}}.
\newblock {\sl \bibinfo{journal}{Applied Categorical Structures}}
  \bibinfo{volume}{28}(\bibinfo{number}{2}), pp. \bibinfo{pages}{339--354},
  \doi{10.1007/s10485-019-09582-w}.
\newblock \urlprefix\url{https://arxiv.org/abs/1811.00214}.

\bibitemdeclare{inproceedings}{GoyPetrisan20}
\bibitem{GoyPetrisan20}
\bibinfo{author}{Alexandre \surnamestart Goy\surnameend} \&
  \bibinfo{author}{Daniela \surnamestart Petri\c{s}an\surnameend}
  (\bibinfo{year}{2020}): \emph{\bibinfo{title}{Combining Probabilistic and
  Non-Deterministic Choice via Weak Distributive Laws}}.
\newblock In: {\sl \bibinfo{booktitle}{Proceedings of the 35th Annual ACM/IEEE
  Symposium on Logic in Computer Science}}, \bibinfo{series}{LICS '20},
  \bibinfo{publisher}{Association for Computing Machinery},
  \bibinfo{address}{New York, NY, USA}, p. \bibinfo{pages}{454–464},
  \doi{10.1145/3373718.3394795}.

\bibitemdeclare{inproceedings}{Goy2021}
\bibitem{Goy2021}
\bibinfo{author}{Alexandre \surnamestart Goy\surnameend},
  \bibinfo{author}{Daniela \surnamestart Petri\c{s}an\surnameend} \&
  \bibinfo{author}{Marc \surnamestart Aiguier\surnameend}
  (\bibinfo{year}{2021}): \emph{\bibinfo{title}{{Powerset-Like Monads Weakly
  Distribute over Themselves in Toposes and Compact Hausdorff Spaces}}}.
\newblock In \bibinfo{editor}{Nikhil \surnamestart Bansal\surnameend},
  \bibinfo{editor}{Emanuela \surnamestart Merelli\surnameend} \&
  \bibinfo{editor}{James \surnamestart Worrell\surnameend}, editors: {\sl
  \bibinfo{booktitle}{48th International Colloquium on Automata, Languages, and
  Programming (ICALP 2021)}}, {\sl \bibinfo{series}{Leibniz International
  Proceedings in Informatics (LIPIcs)}} \bibinfo{volume}{198},
  \bibinfo{publisher}{Schloss Dagstuhl -- Leibniz-Zentrum f{\"u}r Informatik},
  \bibinfo{address}{Dagstuhl, Germany}, pp. \bibinfo{pages}{132:1--132:14},
  \doi{10.4230/LIPIcs.ICALP.2021.132}.
\newblock \urlprefix\url{https://drops.dagstuhl.de/opus/volltexte/2021/14201}.

\bibitemdeclare{inproceedings}{Hyland2021}
\bibitem{Hyland2021}
\bibinfo{author}{Martin \surnamestart Hyland\surnameend} \&
  \bibinfo{author}{Christine \surnamestart Tasson\surnameend}
  (\bibinfo{year}{2021}): \emph{\bibinfo{title}{The linear-non-linear
  substitution 2-monad}}.
\newblock In \bibinfo{editor}{David~I. \surnamestart Spivak\surnameend} \&
  \bibinfo{editor}{Jamie \surnamestart Vicary\surnameend}, editors: {\sl
  \bibinfo{booktitle}{{\rm Proceedings of the 3rd Annual International} Applied
  Category Theory Conference 2020, {\rm Cambridge, USA, 6-10th July 2020}}},
  {\sl \bibinfo{series}{Electronic Proceedings in Theoretical Computer
  Science}} \bibinfo{volume}{333}, \bibinfo{publisher}{Open Publishing
  Association}, pp. \bibinfo{pages}{215--229}, \doi{10.4204/EPTCS.333.15}.
\newblock \urlprefix\url{https://arxiv.org/abs/2005.09559}.

\bibitemdeclare{article}{Keimel17}
\bibitem{Keimel17}
\bibinfo{author}{Klaus \surnamestart Keimel\surnameend} \&
  \bibinfo{author}{Gordon~D. \surnamestart Plotkin\surnameend}
  (\bibinfo{year}{2017}): \emph{\bibinfo{title}{{Mixed powerdomains for
  probability and nondeterminism}}}.
\newblock {\sl \bibinfo{journal}{{Logical Methods in Computer Science}}}
  \bibinfo{volume}{{Volume 13, Issue 1}}, \doi{10.23638/LMCS-13(1:2)2017}.
\newblock \urlprefix\url{https://lmcs.episciences.org/2665}.

\bibitemdeclare{article}{Kelly1980}
\bibitem{Kelly1980}
\bibinfo{author}{G.M. \surnamestart Kelly\surnameend} (\bibinfo{year}{1980}):
  \emph{\bibinfo{title}{A unified treatment of transfinite constructions for
  free algebras, free monoids, colimits, associated sheaves, and so on}}.
\newblock {\sl \bibinfo{journal}{Bulletin of the Australian Mathematical
  Society}} \bibinfo{volume}{22}(\bibinfo{number}{1}), p.
  \bibinfo{pages}{1–83}, \doi{10.1017/S0004972700006353}.

\bibitemdeclare{article}{Klin-Salamanca}
\bibitem{Klin-Salamanca}
\bibinfo{author}{Bartek \surnamestart Klin\surnameend} \&
  \bibinfo{author}{Julian \surnamestart Salamanca\surnameend}
  (\bibinfo{year}{2018}): \emph{\bibinfo{title}{Iterated Covariant Powerset is
  not a Monad}}.
\newblock {\sl \bibinfo{journal}{Electronic Notes in Theoretical Computer
  Science}} \bibinfo{volume}{341}, pp. \bibinfo{pages}{261--276},
  \doi{10.1016/j.entcs.2018.11.013}.
\newblock
  \urlprefix\url{https://www.sciencedirect.com/science/article/pii/S157106611830094X}.
\newblock \bibinfo{note}{Proceedings of the Thirty-Fourth Conference on the
  Mathematical Foundations of Programming Semantics (MFPS XXXIV)}.

\bibitemdeclare{book}{MacLane71}
\bibitem{MacLane71}
\bibinfo{author}{Saunders \surnamestart Mac~Lane\surnameend}
  (\bibinfo{year}{1971}): \emph{\bibinfo{title}{Categories for the Working
  Mathematician}}, \bibinfo{edition}{2nd} edition.
\newblock \bibinfo{publisher}{Springer-Verlag},
  \doi{10.1007/978-1-4757-4721-8}.

\bibitemdeclare{inproceedings}{MioSarkisVign21}
\bibitem{MioSarkisVign21}
\bibinfo{author}{Matteo \surnamestart Mio\surnameend}, \bibinfo{author}{Ralph
  \surnamestart Sarkis\surnameend} \& \bibinfo{author}{Valeria \surnamestart
  Vignudelli\surnameend} (\bibinfo{year}{2021}):
  \emph{\bibinfo{title}{Combining Nondeterminism, Probability, and Termination:
  Equational and Metric Reasoning}}.
\newblock In: {\sl \bibinfo{booktitle}{2021 36th Annual ACM/IEEE Symposium on
  Logic in Computer Science (LICS)}}, pp. \bibinfo{pages}{1--14},
  \doi{10.1109/LICS52264.2021.9470717}.
\newblock \urlprefix\url{https://arxiv.org/abs/2012.00382}.

\bibitemdeclare{inproceedings}{Mislove00}
\bibitem{Mislove00}
\bibinfo{author}{Michael \surnamestart Mislove\surnameend}
  (\bibinfo{year}{2000}): \emph{\bibinfo{title}{Nondeterminism and
  Probabilistic Choice: Obeying the Laws}}.
\newblock In \bibinfo{editor}{Catuscia \surnamestart Palamidessi\surnameend},
  editor: {\sl \bibinfo{booktitle}{CONCUR 2000 --- Concurrency Theory}},
  \bibinfo{publisher}{Springer Berlin Heidelberg}, \bibinfo{address}{Berlin,
  Heidelberg}, pp. \bibinfo{pages}{350--365}, \doi{10.1007/3-540-44618-4_26}.

\bibitemdeclare{book}{Riehl}
\bibitem{Riehl}
\bibinfo{author}{Emily \surnamestart Riehl\surnameend} (\bibinfo{year}{2017}):
  \emph{\bibinfo{title}{Category Theory in Context}}.
\newblock \bibinfo{publisher}{Dover Publications}.
\newblock \urlprefix\url{http://www.math.jhu.edu/~eriehl/context.pdf}.

\bibitemdeclare{article}{Street}
\bibitem{Street}
\bibinfo{author}{Ross \surnamestart Street\surnameend} (\bibinfo{year}{2009}):
  \emph{\bibinfo{title}{Weak distributive laws.}}
\newblock {\sl \bibinfo{journal}{Theory and Applications of Categories
  [electronic only]}} \bibinfo{volume}{22}, pp. \bibinfo{pages}{313--320}.
\newblock \urlprefix\url{http://eudml.org/doc/233565}.

\bibitemdeclare{phdthesis}{Tanakathesis}
\bibitem{Tanakathesis}
\bibinfo{author}{Miki \surnamestart Tanaka\surnameend} (\bibinfo{year}{2004}):
  \emph{\bibinfo{title}{Pseudo-distributive laws and a unified framework for
  variable binding}}.
\newblock Ph.D. thesis, \bibinfo{school}{Laboratory for the Foundations of
  Computer Science, School of Informatics, University of Edinburgh}.
\newblock
  \urlprefix\url{https://www.lfcs.inf.ed.ac.uk/reports/04/ECS-LFCS-04-438/ECS-LFCS-04-438.pdf}.

\bibitemdeclare{book}{Varacca-PhD}
\bibitem{Varacca-PhD}
\bibinfo{author}{Daniele \surnamestart Varacca\surnameend}
  (\bibinfo{year}{2003}): \emph{\bibinfo{title}{Probability, nondeterminism and
  concurrency: two denotational models for probabilistic computation}}.
\newblock \bibinfo{publisher}{BRICS}.
\newblock \urlprefix\url{https://www.brics.dk/DS/03/14/BRICS-DS-03-14.pdf}.

\bibitemdeclare{article}{Varacca06}
\bibitem{Varacca06}
\bibinfo{author}{Daniele \surnamestart Varacca\surnameend} \&
  \bibinfo{author}{Glynn \surnamestart Winskel\surnameend}
  (\bibinfo{year}{2006}): \emph{\bibinfo{title}{Distributing probability over
  non-determinism}}.
\newblock {\sl \bibinfo{journal}{Mathematical Structures in Computer Science}}
  \bibinfo{volume}{16}(\bibinfo{number}{1}), p. \bibinfo{pages}{87–113},
  \doi{10.1017/S0960129505005074}.

\bibitemdeclare{inproceedings}{Zwart19}
\bibitem{Zwart19}
\bibinfo{author}{Maaike \surnamestart Zwart\surnameend} \& \bibinfo{author}{Dan
  \surnamestart Marsden\surnameend} (\bibinfo{year}{2019}):
  \emph{\bibinfo{title}{No-Go Theorems for Distributive Laws}}.
\newblock In: {\sl \bibinfo{booktitle}{34th Annual {ACM/IEEE} Symposium on
  Logic in Computer Science, {LICS} 2019, Vancouver, BC, Canada, June 24-27,
  2019}}, \bibinfo{publisher}{{IEEE}}, pp. \bibinfo{pages}{1--13},
  \doi{10.1109/LICS.2019.8785707}.
\newblock
  \urlprefix\url{https://ieeexplore.ieee.org/xpl/conhome/8765678/proceeding}.

\end{thebibliography}
\newpage
\section{Appendix}\label{appendix}
\subsection{Proofs for Section \ref{dllifts}}
\begin{proof}[Proof of Theorem \ref{thm:wdlliftings}]
    Let $\lambda :MT \Rightarrow TM$ be a weak distributive law, we define a functor $\widetilde{T}: \EMs(M) \rightarrow \EMs(M)$ that sends an $M$-semialgebra $MX \xrightarrow{x} X$ to $MTX \xrightarrow{\lambda_X} TMX \xrightarrow{Tx} TX$. First, we check that $Tx \circ \lambda_X$ is an $M$-semialgebra with the following diagram where (a) is \eqref{diag:dlmultM} instantiated for $\lambda$, (b) is the naturality of $\lambda$ and (c) is $T$ applied to associativity of $x$.
    \begin{equation}\label{diag:liftTmultalg}
    \begin{tikzcd}
    MMTX && MTX \\
    MTMX & TMMX & TMX \\
    MTX & TMX & TX
    \arrow["{\lambda_X}", from=1-3, to=2-3]
    \arrow["Tx", from=2-3, to=3-3]
    \arrow[""{name=0, anchor=center, inner sep=0}, "{\lambda_X}"', from=3-1, to=3-2]
    \arrow[""{name=1, anchor=center, inner sep=0}, "Tx"', from=3-2, to=3-3]
    \arrow[""{name=2, anchor=center, inner sep=0}, "{\mu^M_{TX}}", from=1-1, to=1-3]
    \arrow["{M\lambda_X}"', from=1-1, to=2-1]
    \arrow["MTx"', from=2-1, to=3-1]
    \arrow["TMx"', from=2-2, to=3-2]
    \arrow[""{name=3, anchor=center, inner sep=0}, "{T\mu^M_X}", from=2-2, to=2-3]
    \arrow[""{name=4, anchor=center, inner sep=0}, "{\lambda_{MX}}", from=2-1, to=2-2]
    \arrow["{\text{(a)}}"{description}, Rightarrow, draw=none, from=2, to=2-2]
    \arrow["{\text{(b)}}"{description}, shift right=1, Rightarrow, draw=none, from=4, to=0]
    \arrow["{\text{(c)}}"{description}, Rightarrow, draw=none, from=3, to=1]
    \end{tikzcd}
    \end{equation}
    Next, $\widetilde{T}$ sends a homomorphism $f: (X,x) \rightarrow (Y,y)$ to $Tf:(TX,Tx \circ \lambda_X) \rightarrow (TY,Ty\circ \lambda_Y)$ which is a homomorphism because
    \[Tf \circ Tx \circ \lambda_X = Ty \circ TMf \circ \lambda_X = Ty \circ \lambda_Y \circ MTf.\]
    We immediately see that $\forgets\widetilde{T} = T\forgets$. Next, we check that the components of the unit and multiplication determined by $\forgets \widetilde{\eta} = \eta^T\forgets$ and $\forgets\widetilde{\mu} = \mu^T\forgets$ are homomorphisms. The unit is $\widetilde{\eta}_{(X,x)} = \eta^T_X: (X,x) \rightarrow (TX,Tx \circ \lambda_X)$ which is a homomorphism because $Tx \circ \lambda_X \circ M\eta^T_X = Tx \circ \eta^T_{MX} = \eta^T_X \circ x$. The multiplication is $\widetilde{\mu}_{(X,x)} = \mu^T_X: (TTX, T(Tx \circ \lambda_X) \circ \lambda_{TX}) \rightarrow (TX, Tx \circ \lambda_X)$ which is a homomorphism by the following diagram where (a) is \eqref{diag:dlmultMhat} instantiated for $\lambda$ and (b) is naturality of $\mu^T$.
    \begin{equation}
        \begin{tikzcd}
            MTTX && MTX \\
            TMTX \\
            TTMX && TMX \\
            TTX && TX
            \arrow["{\lambda_{TX}}"', from=1-1, to=2-1]
            \arrow["{T\lambda_X}"', from=2-1, to=3-1]
            \arrow["TTx"', from=3-1, to=4-1]
            \arrow[""{name=0, anchor=center, inner sep=0}, "{\mu^T_X}"', from=4-1, to=4-3]
            \arrow["{\lambda_X}", from=1-3, to=3-3]
            \arrow["Tx", from=3-3, to=4-3]
            \arrow[""{name=1, anchor=center, inner sep=0}, "{M\mu^T_X}", from=1-1, to=1-3]
            \arrow[""{name=2, anchor=center, inner sep=0}, "{\mu^T_{MX}}", from=3-1, to=3-3]
            \arrow["{\text{(b)}}"{description}, Rightarrow, draw=none, from=2, to=0]
            \arrow["{\text{(a)}}"{description}, Rightarrow, draw=none, from=1, to=2]
        \end{tikzcd}
    \end{equation}
    We conclude that $(\widetilde{T},\widetilde{\eta},\widetilde{\mu})$ is a lifting of $T$ to $\EMs(M)$. It remains to show that $\widetilde{T}$ satisfies \eqref{eqn:hypothesislifting}. It follows from the following diagram where (a) is $T$ applied to \eqref{diag:unitmonad}, (b) is naturality of $\lambda$ and (c) is $T$ applied to associativity of $x$.
    \begin{equation}
    \begin{tikzcd}
        MTX && TMX \\
        & TMX \\
        MTMX & TMMX \\
        MTX & TMX & TX
        \arrow["{MT\eta^M_X}"', from=1-1, to=3-1]
        \arrow["MTx"', from=3-1, to=4-1]
        \arrow[""{name=0, anchor=center, inner sep=0}, "{\lambda_X}"', from=4-1, to=4-2]
        \arrow["Tx"', from=4-2, to=4-3]
        \arrow["{\lambda_X}", from=1-1, to=1-3]
        \arrow[""{name=1, anchor=center, inner sep=0}, "Tx", from=1-3, to=4-3]
        \arrow[""{name=2, anchor=center, inner sep=0}, "TMx", from=3-2, to=4-2]
        \arrow[""{name=3, anchor=center, inner sep=0}, "{T\mu^M_X}"', from=3-2, to=1-3]
        \arrow["{TM\eta^M_X}"', from=2-2, to=3-2]
        \arrow[""{name=4, anchor=center, inner sep=0}, "{\lambda_X}", from=1-1, to=2-2]
        \arrow[Rightarrow, no head, from=2-2, to=1-3]
        \arrow["{\text{(a)}}"{description}, Rightarrow, draw=none, from=2-2, to=3]
        \arrow["{\text{(c)}}"{description, pos=0.7}, Rightarrow, draw=none, from=2, to=1]
        \arrow["{\text{(b)}}"{description}, Rightarrow, draw=none, from=4, to=0]
    \end{tikzcd}
    \end{equation}

    In the other direction, we start with a lifting $(\widetilde{T},\widetilde{\eta},\widetilde{\mu})$ satisfying $\widetilde{T}x = \widetilde{T}x \circ MTx \circ MT\eta^M_X$ for any $M$-semialgebra $x:MX \rightarrow X$, and we let $\lambda_X = \widetilde{T}\mu^M_X \circ MT\eta^M_X$. We will show that $\lambda$ is a weak distributive law.

    First, naturality follows from the following diagram where (a) commutes by naturality of $\eta^M$ and (b) commutes because $TMf$ is the image of $Mf: MX \rightarrow MY$ which is a homomorphism between $\mu^M_X$ and $\mu^M_Y$ by naturality of $\mu^M$.
    \begin{equation}
        \begin{tikzcd}
            MTX & MTMX & TMX \\
            MTY & MTMY & TMY
            \arrow[""{name=0, anchor=center, inner sep=0}, "{MT\eta^M_Y}"', from=2-1, to=2-2]
            \arrow[""{name=1, anchor=center, inner sep=0}, "{\widetilde{T}\mu^M_Y}"', from=2-2, to=2-3]
            \arrow["MTf"', from=1-1, to=2-1]
            \arrow[""{name=2, anchor=center, inner sep=0}, "{MT\eta^M_X}", from=1-1, to=1-2]
            \arrow[""{name=3, anchor=center, inner sep=0}, "{\widetilde{T}\mu^M_X}", from=1-2, to=1-3]
            \arrow["MTMf"{description}, from=1-2, to=2-2]
            \arrow["TMf", from=1-3, to=2-3]
            \arrow["{\text{(a)}}"{description}, Rightarrow, draw=none, from=2, to=0]
            \arrow["{\text{(b)}}"{description}, Rightarrow, draw=none, from=3, to=1]
        \end{tikzcd}
    \end{equation}
    Next, we show that the instances of \eqref{diag:dlunitMhat}, \eqref{diag:dlmultM} and \eqref{diag:dlmultMhat} for $\lambda$ commute by paving the following diagrams.
    \begin{gather}
        \begin{tikzcd}[ampersand replacement=\&]
        \&\& MX \\
        \\
        \&\& MMX \& MX \\
        MTX \&\& MTMX \&\& TMX
        \arrow[""{name=0, anchor=center, inner sep=0}, "{M\eta^T_X}"', curve={height=30pt}, from=1-3, to=4-1]
        \arrow["{MT\eta^M_X}"', from=4-1, to=4-3]
        \arrow[""{name=1, anchor=center, inner sep=0}, "{\widetilde{T}\mu^M_X}"', from=4-3, to=4-5]
        \arrow["{M\eta^M_X}"', from=1-3, to=3-3]
        \arrow["{M\eta^T_{MX}}"', from=3-3, to=4-3]
        \arrow["{\eta^T_{MX}}", curve={height=-30pt}, from=1-3, to=4-5]
        \arrow[""{name=2, anchor=center, inner sep=0}, "{\mu^M_X}"', from=3-3, to=3-4]
        \arrow["{\eta^T_{MX}}", from=3-4, to=4-5]
        \arrow[""{name=3, anchor=center, inner sep=0}, Rightarrow, no head, from=1-3, to=3-4]
        \arrow["{\text{(a)}}"{description}, Rightarrow, draw=none, from=0, to=4-3]
        \arrow["{\text{(c)}}"{description, pos=0.6}, Rightarrow, draw=none, from=2, to=1]
        \arrow["{\text{(b)}}"{description}, Rightarrow, draw=none, from=3, to=3-3]
        \end{tikzcd}\\
        \begin{tikzcd}[ampersand replacement=\&]
            MMTX \& MMTMX \& MTMX \& MTMMX \& TMMX \\
            \& MMTMX \&\& MTMX \\
            MTX \&\& MTMX \&\& TMX
            \arrow["{MT\eta^M_X}"', from=3-1, to=3-3]
            \arrow[""{name=0, anchor=center, inner sep=0}, "{\widetilde{T}\mu^M_X}"', from=3-3, to=3-5]
            \arrow["{MMT\eta^M_X}", from=1-1, to=1-2]
            \arrow["{M\widetilde{T}\mu^M_X}", from=1-2, to=1-3]
            \arrow["{MT\eta^M_{MX}}", from=1-3, to=1-4]
            \arrow[""{name=1, anchor=center, inner sep=0}, "{\widetilde{T}\mu^M_{MX}}", from=1-4, to=1-5]
            \arrow["{\mu^M_{TX}}"', from=1-1, to=3-1]
            \arrow["{T\mu^M_X}", from=1-5, to=3-5]
            \arrow["{MMT\eta^M_X}"', from=1-1, to=2-2]
            \arrow["{\mu^M_{TMX}}"', from=2-2, to=3-3]
            \arrow[""{name=2, anchor=center, inner sep=0}, "{\widetilde{T}\mu^M_X}"', from=2-4, to=3-5]
            \arrow[""{name=3, anchor=center, inner sep=0}, "{M\widetilde{T}\mu^M_X}", from=2-2, to=2-4]
            \arrow["{MT\mu^M_X}", from=1-4, to=2-4]
            \arrow[""{name=4, anchor=center, inner sep=0}, Rightarrow, no head, from=1-3, to=2-4]
            \arrow["{\text{(f)}}"{description}, Rightarrow, draw=none, from=2-2, to=3-1]
            \arrow["{\text{(g)}}"{description}, Rightarrow, draw=none, from=3, to=0]
            \arrow["{\text{(d)}}"{description, pos=0.7}, shift right=1, Rightarrow, draw=none, from=4, to=1-4]
            \arrow["{\text{(e)}}"{description}, Rightarrow, draw=none, from=1, to=2]
        \end{tikzcd}\\
        \begin{tikzcd}[ampersand replacement=\&]
            MTTX \& MTMTX \& TMTX \& TMTMX \& TTMX \\
            \&\& MTMTMX \& MTTMX \\
            \\
            \& MTTMX \&\& TTMX \\
            MTX \& MTMX \&\&\& TMX
            \arrow[""{name=0, anchor=center, inner sep=0}, "{M\mu^T_X}"', from=1-1, to=5-1]
            \arrow["{MT\eta^M_{TX}}", from=1-1, to=1-2]
            \arrow["{\widetilde{T}\mu^M_{TX}}", from=1-2, to=1-3]
            \arrow["{TMT\eta^M_X}", from=1-3, to=1-4]
            \arrow["{T\widetilde{T}\mu^M_X}", from=1-4, to=1-5]
            \arrow["{\mu^T_{MX}}", from=1-5, to=5-5]
            \arrow["{MT\eta^M_X}"', from=5-1, to=5-2]
            \arrow[""{name=1, anchor=center, inner sep=0}, "{\widetilde{T}\mu^M_X}"', from=5-2, to=5-5]
            \arrow["{MTT\eta^M_X}", from=1-1, to=4-2]
            \arrow["{M\mu^T_{MX}}"', from=4-2, to=5-2]
            \arrow["{MTMT\eta^M_X}"', from=1-2, to=2-3]
            \arrow["{\widetilde{T}\mu^M_{TMX}}"{description}, from=2-3, to=1-4]
            \arrow["{MT\eta^M_{TMX}}", from=4-2, to=2-3]
            \arrow[""{name=2, anchor=center, inner sep=0}, "{\widetilde{T}\widetilde{T}\mu^M_X}", from=4-2, to=4-4]
            \arrow["{\mu^T_{MX}}", from=4-4, to=5-5]
            \arrow["{\text{(h)}}"{description, pos=0.4}, shift left=2, Rightarrow, draw=none, from=1-2, to=4-2]
            \arrow["{\text{(i)}}"{description}, Rightarrow, draw=none, from=1-3, to=2-3]
            \arrow[Rightarrow, no head, from=4-4, to=1-5]
            \arrow["{MT\widetilde{T}\mu^M_X}"', from=2-3, to=2-4]
            \arrow["{\widetilde{T}\widetilde{T}\mu^M_X}"{description}, from=2-4, to=1-5]
            \arrow["{\text{(j)}}"{description}, Rightarrow, draw=none, from=1-4, to=2-4]
            \arrow["{\text{(m)}}"{description}, Rightarrow, draw=none, from=2, to=1]
            \arrow["{\text{(k)}}"{description}, Rightarrow, draw=none, from=0, to=4-2]
            \arrow["{\text{(l)}}"{description}, Rightarrow, draw=none, from=2-4, to=2]
        \end{tikzcd}
    \end{gather}
    \begin{multicols}{2}
        \begin{enumerate}[(a)]
            \item Naturality of $\eta^M$ and $\eta^T$.
            \item By left of \eqref{diag:unitmonad}.
            \item By hypothesis, $\widetilde{\eta}_{(MX,\mu^M_X)} = \eta^T_{MX}$ is a homomorphism.
            \item Apply $MT$ to right of \eqref{diag:unitmonad}.
            \item Apply $\widetilde{T}$ to $\mu^M_X$ as a homomorphism $(MMX,\mu^M_{MX}) \rightarrow (MX,\mu^M_X)$ in $\EMs(M)$.
            \item Naturality of $\mu^M$ and $\eta^M$.
            \item Associativity of $\widetilde{T}\mu^M_X$.
            \item Naturality of $\eta^M$.
            \item Apply $\widetilde{T}$ to $MT\eta^M_X$ as a homomorphism $(MTX, \mu^M_{TX}) \rightarrow (MTMX, \mu^M_{TMX})$ in $\EMs(M)$.
            \item Apply $\widetilde{T}$ to $\widetilde{T}\mu^M_X$ as a homomorphism $(MTMX, \mu^M_{TMX}) \rightarrow (TMX,\widetilde{T}\mu^M_X)$ in $\EMs(M)$.
            \item Naturality of $\mu^T$ and $\eta^M$.
            \item By \eqref{eqn:hypothesislifting}.
            \item By hypothesis, $\widetilde{\mu}_{(MX,\mu^M_X)} = \mu^T_{MX}$ is a homomorphism.
        \end{enumerate}
    \end{multicols}
    We conclude that $\lambda$ is a weak distributive law.

    Finally, we are left to show that the operations we described are inverses. Starting with $\lambda$, we obtain a lifting $\widetilde{T}$ sending $\mu^M_X$ to $T\mu^M_X \circ \lambda_{MX}$ which is sent to $\lambda'$ whose component at $X$ is 
    \[\lambda'_X = T\mu^M_X \circ \lambda_{MX} \circ MT\eta^M_X = T\mu^M_X \circ TM\eta^M_X \circ \lambda_X = \lambda_X.\]
    Starting with $\widetilde{T}$, we obtain a weak distributive law $\lambda$ whose component at $X$ is $\widetilde{T}\mu^M_X \circ MT\eta^M_X$ which is sent to $\widetilde{T}'$ which sends an $M$-semialgebra $x:MX \rightarrow X$ to 
    \begin{align*}
        \widetilde{T}'x &= Tx \circ \lambda_X\\
        &= Tx \circ \widetilde{T}\mu^M_X \circ MT\eta^M_X\\
        &= \widetilde{T}x \circ MTx \circ MT\eta^M_X &&\text{Apply $\widetilde{T}$ to $x$ as a homomorphism $(MX,\mu^M_X) \rightarrow (X,x)$.}\\
        &= \widetilde{T}x. &&\text{$\widetilde{T}$ satisfies \eqref{eqn:hypothesislifting}}
    \end{align*}
    The theorem follows.
\end{proof}

\subsection{Proofs for Section \ref{exmps}}
\begin{proof}[Proof of Theorem \ref{thm:pressemisemigroup}]
    Given a $(-)^+$-semialgebra $a:X^+ \rightarrow X$, we define $\a:= a[-] :X \rightarrow X$ and $-\cdot- := a[-,-] : X\times X \rightarrow X$. Let us verify each of the equations in $E^{\mathrm{s}}_+$ hold.
    \begin{multicols}{3}
    \begin{align*}
        \a\a x &= a[\a x]\\
        &= a[a[x]]\\
        &= (a \circ a^+)[[x]]\\
        &= (a \circ \mu_X)[[x]]\\
        &= a[x]\\
        &= \a x
    \end{align*}
    \begin{align*}
        \a(x \cdot y) &= a[a[x,y]]\\
        &= (a \circ a^+)[[x,y]]\\
        &= (a \circ \mu_X)[[x,y]]\\
        &= a[x,y]= x \cdot y
    \end{align*}
    \begin{align*}
        \a x \cdot \a y &= a[a[x],a[y]]\\
        &= (a \circ a^+)[[x],[y]]\\
        &= (a \circ \mu_X)[[x],[y]]\\
        &= a[x,y] = x \cdot y
    \end{align*}
    \begin{align*}
        (x \cdot y) \cdot z &= a[a[x,y],z]\\
        &= a[a[a[x,y]],a[z]]\\
        &= a[a[x,y],a[z]]\\
        &= (a \circ a^+)[[x,y],[z]]\\
        &= (a \circ \mu_X)[[x,y],[z]]\\
        &= a[x,y,z]\\
        &= \vdots \text{ symmetric argument}\\
        &= a[x,a[y,z]] = x \cdot (y \cdot z).
    \end{align*}
\end{multicols}

    Conversely, given $\a: X \rightarrow X$ and $\cdot: X\times X \rightarrow X$ satisfying the equations in $E^{\mathrm{s}}_+$, we define $a:X^+ \rightarrow X$ by $[x_1,\dots,x_n] \mapsto \a x_1 \cdots \a x_n$ which is well-defined by associativity. We need to generalize the equation $\a x \cdot \a y = x \cdot y = \a(x \cdot y)$ to longer strings of $\cdot$, namely, we claim that $\a x_1 \cdots \a x_n = x_1 \cdots x_n = \a(x_1\cdots x_n)$. We proceed by induction starting with $n= 2$ which is true by hypothesis. If it holds for $n-1$, then 
    \begin{align*}
        \a x_1 \cdots \a x_n &= (\a x_1 \cdots \a x_{n-1}) \cdot \a x_n\\
        &= (x_1 \cdots x_{n-1}) \cdot \a x_n\\
        &= \a (x_1 \cdots x_{n-1}) \cdot \a \a x_n\\
        &= \a(x_1 \cdots x_{n-1}) \cdot \a x_n = \a(x_1\cdots x_n)\\
        &= (x_1 \cdots x_{n-1}) \cdot x_n\\
        &= x_1 \cdots x_n.
    \end{align*}
    In the following derivation which shows $a$ is a $(-)^+$-semialgebra, we need a slightly weaker version that holds even for $n=1$: $\a(\a x_1 \cdots \a x_n) = \a x_1 \cdots \a x_n$. For any $L = [[x_{1,1},\dots, x_{n_1,1}], \dots, [x_{k,1},\dots, x_{n_k,k}]] \in (X^+)^+$,
    \begin{align*}
        a(a^+(L)) &= a[\a x_{1,1}\cdots \a x_{n_1,1}, \dots, \a x_{k,1}\cdots \a x_{n_k,k}]\\
        &= \a(\a x_{1,1}\cdots \a x_{n_1,1})\cdots \a(\a x_{k,1}\cdots \a x_{n_k,k})\\
        &= \a x_{1,1}\cdots \a x_{n_1,1} \cdots \a x_{k,1}\cdots \a x_{n_k,k}\\
        &= a[x_{1,1},\dots, x_{n_1,1}, x_{2,1} \dots, x_{n_{k-1},k-1},x_{k,1},\dots, x_{n_k,k}]\\
        &= a(\mu_X(L)).
    \end{align*}
    
    Let us show these operations are inverses. If $\a$ and $\cdot$ are obtained from $a:X^+ \rightarrow X$, then we proceed by induction to show that $a[x_1,\dots, x_n] = \a x_1 \cdots \a x_n$. For $n = 1$, it is clear. For $n=2$, we have $a[x_1,x_2] = a[a[x_1],a[x_2]] = \a x_1 \cdot \a x_2$. Suppose it holds for $n-1$, then 
    \begin{align*}
        a[x_1,\dots, x_n] &= (a \circ \mu_X)[[x_1,\dots, x_{n-1}], [x_n]]\\
        &= a[a[x_1,\dots,x_{n-1}],a[x_n]]\\
        &= a[\a x_1\cdots \a x_{n-1},\a x_n]\\
        &= \a(\a x_1\cdots \a x_{n-1}) \cdot \a \a x_n\\
        &= \a x_1 \cdots \a x_n.
    \end{align*}
    We conclude that the semialgebra obtained from $\a$ and $\cdot$ is $a$. In the other direction, let $a$ be obtained from $\a$ and $\cdot$. We have $a[x,y] = \a x \cdot \a y = x \cdot y$ and $a[x] = \a x$ showing that the operations obtained from $a$ are $\a$ and $\cdot$.

    Finally, we check that these operations are functorial. Suppose $f:X \rightarrow Y$ is a homomorphism between $a$ and $b$, then 
    \begin{gather*}
        f(\a(x)) = f(a[x]) = b[f(x)] = \b f(x)\\
        f(x\cdot^a y) = f(a[x,y]) = b[f(x),f(y)] = f(x) \cdot^b f(y).
    \end{gather*}
    Conversely, if $f \circ \a = \b \circ f$ and $f \circ \cdot^a = \cdot^b \circ (f\times f)$, then 
    \begin{gather*}
        f(a[x_1,\dots,x_n]) = f(\a x_1 \cdot^a\cdot^a\cdot^a \a x_n) = \b f(x_1) \cdot^b\cdot^b\cdot^b \b f(x_n) = b[f(x_1),\dots,f(x_n)].
    \end{gather*}
\end{proof}
\begin{proof}[Proof of Theorem \ref{thm:pressemibary}]
    Given a $\mD$-semialgebra $a:\mD X \rightarrow X$, we define $x+_p y := a(px+\overline{p}y)$ and $\a x = a(1x)$. Let us verify each equation in $E^{\mathrm{s}}_{\mD}$ holds.
    \begin{multicols}{3}
        \begin{align*}
            \a \a x &= \a a(1x)\\
            &= a(1a(1x))\\
            &= (a \circ \mD a)(1(1x))\\
            &= (a \circ \mu_X)(1(1x))\\
            &= a(1x) = \a x
        \end{align*}
        \begin{align*}
            x+_p x &= a(px+\overline{p}x)\\
            &= a(1x) = \a x
        \end{align*}
        \begin{align*}
            x+_p y &= a(px + \overline{p}y)\\
            &= a(\overline{p}y + px)\\
            &= y+_{1-p}x
        \end{align*}
        \begin{align*}
            \a (x+_p y) &= a(1a(px+\overline{p}y))\\
            &= (a \circ \mD a)(1(px+\overline{p}y))\\
            &= (a \circ \mu_X)(1(px+\overline{p}y))\\
            &= a(px+\overline{p}y)\\
            &= x+_p y
        \end{align*}
        \begin{align*}
            \a x+_p \a y &= a(p\a x+\overline{p}\a y)\\
            &= a(pa(1x)+\overline{p}a(1y))\\
            &= (a \circ \mD a)(p(1x)+\overline{p}(1y))\\
            &= (a \circ \mu_X)(p(1x)+\overline{p}(1y)))\\
            &= a(px+\overline{p}y)\\
            &= x+_p y
        \end{align*}
        \begin{align*}
            (x+_q y) +_p z &= a(p(x+_q y) + \overline{p}z)\\
            &= a(pa(qx + \overline{q}y)) + \overline{p}z)\\
            &= a(pa(1a(qx+\overline{q}y))+\overline{p}a(1z))\\
            &= a(pa(qx+\overline{q}y)+\overline{p}a(1z))\\
            &= (a \circ \mD a)(p(qx+\overline{q}y)+\overline{p}(1z))\\
            &= (a \circ \mu_X)(p(qx+\overline{q}y)+\overline{p}(1z))\\
            &= a(pqx+ p\overline{q}y + \overline{p}z)\\
            &= \vdots \text{ symmetric argument}\\
            &= x+_{pq} (y +_{\frac{p (1-q)}{1-pq}} z)
        \end{align*}
    \end{multicols}

    Conversely, given $+_p$ and $\a$ satisfying $E^{\mathrm{s}}_{\mD}$, we define $a:\mD X \rightarrow X$ by $a(\sum_{i=1}^n p_i x_i) = \bigplus_{i=1}^n p_i \a x_i$. We need to generalize the equation $\a(x+_p y) = x+_p y = \a x+_p \a y$ to distributions with larger support, namely, we claim that $\a\left( \bigplus_{i=1}^n p_ix_i \right) = \bigplus_{i=1}^n p_ix_i = \bigplus_{i=1}^n p_i \a x_i$. We proceed by induction starting with $n=2$ which is true by hypothesis. Suppose it holds for $n-1$, and let $p'_i$ for $1\leq i \leq n$ be such that $\bigplus_{i=1}^n p_i x_i = \bigplus_{i=1}^{n-1} p'_i x_i +_{p'_n} x_n$, then 
    \begin{align*}
        \bigplus_{i=1}^n p_i \a x_i &= \bigplus_{i=1}^{n-1} p'_i \a x_i +_{p'_n} \a x_n\\
        &= \bigplus_{i=1}^{n-1} p'_i x_i +_{p'_n} \a x_n\\
        &= \a\left( \bigplus_{i=1}^{n-1} p'_i x_i \right) +_{p'_n} \a\a x_n\\
        &= \a\left( \bigplus_{i=1}^{n-1} p'_i x_i \right) +_{p'_n} \a x_n = \a\left( \bigplus_{i=1}^{n-1} p'_i x_i +_{p'_n} x_n \right) = \a\left( \bigplus_{i=1}^np_i x_i \right)\\
        &= \left( \bigplus_{i=1}^{n-1} p'_i x_i \right) +_{p'_n} x_n\\
        &= \bigplus_{i=1}^np_i x_i.
    \end{align*}
    In the following derivation which shows that $a$ is a $\mD$-semialgebra, we need a slightly weaker version that holds even for $n=1$: $\a(\bigplus_{i=1}^n p_i \a x_i) = \bigplus_{i=1}^n p_i \a x_i$. For any $\Phi = \sum_{i=1}^n p_i \left( \sum_{j=1}^{m_i} q_{i,j}x_{i,j}\right)$,\footnote{In this derivation, we cannot assume that all the $x_{i,j}$s are distinct, so skew associativity is not enough to say that $\bigplus_{i=1}^n$ is well-defined. However, we may use skew commutativity and idempotence as well because $\a$ is applied to every term of the sum and idempotence holds when this is the case.}
    \begin{align*}
        a(\mD a(\Phi)) &= a\left( \sum_{i=1}^np_i\left( \bigplus_{j=1}^{m_i} q_{i,j}\a x_{i,j} \right) \right)\\
        &= \bigplus_{i=1}^n p_i \a \left( \bigplus_{j=1}^{m_i} q_{i,j}\a x_{i,j} \right)\\
        &= \bigplus_{i=1}^n p_i \left( \bigplus_{j=1}^{m_i} q_{i,j}\a x_{i,j} \right)\\
        &= \bigplus_{i=1}^n \bigplus_{j=1}^{m_i}p_iq_{i,j}\a x_{i,j}\\
        &= a\left( \sum_{i=1}^n\sum_{j=1}^{m_i} p_iq_{i,j}x_{i,j} \right)\\
        &= a(\mu_X(\Phi)).
    \end{align*}

    Let us show that these operations are inverses. If $\a$ and $+_p$ are obtained from $a: \mD X \rightarrow X$, then we proceed by induction to show that $a(\sum_{i=1}^n p_ix_i) = \bigplus_{i=1}^n p_i \a x_i$. For $n=1$ it is clear. For $n=2$, we have $a(px+\overline{p}y) = a(pa(1x)+\overline{p}a(1y)) = \a x +_p \a y$. Suppose it hods for $n-1$, then
    \begin{align*}
        a\left( \sum_{i=1}^n p_i x_i \right) &= (a \circ \mu_X)\left( p'_n\left( \sum_{i=1}^{n-1}\frac{p'_i}{p'_n}x_i \right) +\overline{p'_n} (1x_n)\right)\\
        &= (a \circ \mD a)\left( p'_n\left( \sum_{i=1}^{n-1}\frac{p'_i}{p'_n}x_i \right) +\overline{p'_n} (1x_n)\right)\\
        &= a\left( p'_na\left( \sum_{i=1}^{n-1}\frac{p'_i}{p'_n}x_i \right) +\overline{p'_n} a(1x_n)\right)\\
        &= a\left(p'_n\left( \bigplus_{i=1}^{n-1}\frac{p'_i}{p'_n}\a x_i \right) + \overline{p'_n} \a x_n  \right)\\
        &= \a \left( \bigplus_{i=1}^{n-1}\frac{p'_i}{p'_n}\a x_i \right) +_{p'_n} \a \a x_n\\
        &= \left( \bigplus_{i=1}^{n-1}\frac{p'_i}{p'_n}\a x_i \right) +_{p'_n} \a x_n\\
        &= \bigplus_{i=1}^{n} \a x_i.
    \end{align*}
    We conclude that the semialgebra obtained from $\a$ and $+_p$ is $a$.

    In the other direction, let $a$ be obtained from $\a$ and $+_p$. We have $a(px+\overline{p}y) = \a x+_p \a y = x+_p y$ and $a(1x) = \a x$ showing that the operations obtained from $a$ are $\a$ and $+_p$. 

    Finally, we check that these operations are functorial. Suppose $f:X \rightarrow Y$ is a homomorphism between $a$ and $b$, then 
    \begin{gather*}
        f(\a x) = f(a(1x)) = b(1f(x)) = \b f(x)\\
        f(x+_p^a y) = f(a(px+\overline{p}y)) = b(pf(x)+\overline{p}f(y)) = f(x)+_p^b f(y).
    \end{gather*}
    Conversely, if $f \circ \a = \b \circ f$ and $f \circ +_p^a = +_p^b \circ (f\times f)$, then
    \[f(a(\sum_{i=1}^np_i x_i)) = f\left( \bigplus_{i=1}^n p_i \a x_i \right) =  \bigplus_{i=1}^n p_i f(\a x_i) =  \bigplus_{i=1}^n p_i \b f(x_i) = b(\sum_{i=1}^n p_i f(x_i)).\]
\end{proof}
\end{document}